\newtheorem{theorem}{Theorem}
\newtheorem{lemma}{Lemma}
\newtheorem{corollary}{Corollary}
\newproof{proof}{Proof}
\begin{document}
\begin{frontmatter}
\title{Modeling Dengue Outbreaks}
\author{Marcelo J Otero, Daniel H Barmak, Claudio O Dorso, Hernán G Solari}
\address{Departamento de Física, FCEyN-UBA and IFIBA-CONICET}
\author{Mario A Natiello}
\address{Centre for Mathematical Sciences, Lund University, Sweden}
\begin{abstract}
We introduce a dengue model (SEIR) where the human individuals are
treated on an
individual basis (IBM) while the mosquito population, produced by an
independent model, is treated by compartments (SEI). We study the spread of
epidemics by the sole action of the mosquito. Exponential, deterministic and
experimental distributions for the (human) exposed period are considered in two
weather scenarios, one corresponding to temperate climate and the other to
tropical climate. Virus circulation, final epidemic size and duration of
outbreaks are considered showing that the results present little sensitivity to
the statistics followed by the exposed period provided the median of the
distributions are in coincidence. Only the time between an introduced (imported)
case and the appearance of the first symptomatic secondary case is sensitive to
this distribution. We finally show that the IBM model introduced is precisely
a realization of a compartmental model, and that at least in this case, the
choice between compartmental models or IBM is only a matter of convenience.
\end{abstract} 

\begin{keyword}
epidemiology
\sep dengue \sep Individual Based Model \sep Compartmental Model \sep
stochastic
\end{keyword}
\end{frontmatter}

\section{Introduction}
Dengue fever is a vector-born disease produced by a \emph{flavivirus} of the
family \emph{flaviviridae} \cite{gubl98}. The main vectors of dengue are
\emph{Aedes aegypti} and \emph{Aedes albopictus}.

The research aimed at producing dengue models for public policy use began with
Newton and Reiter \cite{newt92} who introduced the minimal model for dengue in
the form of a set of Ordinary Differential Equations (ODE) for the human
population disaggregated in Susceptible, Exposed, Infected and Recovered
compartments. The mosquito population was not modeled in this early work.  A
different starting point was taken by Focks et al. \cite{fock93a,fock93b} that
began by describing mosquito populations in a computer framework named
Dynamic Table Model where later the human population (as well as the disease)
was introduced \cite{fock95}.

Newton and Reiter's model ({\bf NR}) favours economy of resources and
mathematical accessibility, in contrast, Focks' model emphasize realism. These
models represent in Dengue two contrasting compromises in the standard 
trade-off in modeling. A third starting point has been recently added. Otero, Solari
and Schweigmann ({\bf OSS}) developed a dengue model \cite{oter10} which
includes the evolution of the mosquito population \cite{oter06, oter08} and is
spatially explicit. This last model is somewhat in between Focks' and NR as it is
formulated as a state-dependent Poisson model with exponentially distributed
times.

Each approach has been further developed
\cite{este98,este99,este00,bart02,pong03,mago09,fern10}. ODE models have
received most of the attention. Some of the works explore: variability of
vector population \cite{este98}, human population \cite{este99}, the effects of
hypothetical vertical transmission of Dengue in vectors \cite{este00},
seasonality \cite{bart02}, age structure \cite{pong03} as well as incomplete
gamma distributions for the incubation and infectious times \cite{chow07}.
Contrasting modeling outcomes with those of real epidemics has shown the need 
to consider spatial heterogeneity as well \cite{favi05}.

The development of computing technology has made possible to produce Individual
Based Models ({\bf IBM}) for epidemics \cite{grim99,bian04}. IBM have been
advocated as the most realistic models \cite{bian04} since their great
flexibility allows the modeler to describe disease evolution and human mobility
at the individual level. When the results are only to be analysed numerically,
IBM are probably the best choice.  However, they are frequently presented in a
most unfriendly way for mathematicians as they usually lack a formulation
(expression in closed formulae) and are --at best-- presented as algorithms if
not just in words \cite{grim99}. In contrast, working on the ODE side, it has
been possible, for example, to achieve an understanding of the influence of
distribution of the infectious period in epidemic modeling
\cite{lloy01,lloy01b,feng07}. IBM have been used to study the time interval
between primary and secondary cases \cite{fine03} which is influenced, in the
case of dengue, mainly by the extrinsic (mosquito) and intrinsic (human)
incubation period.

In this work an IBM model for human population in a dengue epidemic is
presented. The model is driven by mosquito populations modeled with spatial
heterogeneity with the method introduced in \cite{oter08} (see Section II).
The IBM model is then used to examine the actual influence of the distribution
of the incubation period comparing the most relevant information produced by
dengue models: dependence of the probability of dengue circulation with respect
to the mosquito population and the total epidemic size. Exponential, delta
(fixed times, deterministic) and experimental \cite{nish07} distributions are
contrasted (Section III). The infectious period and the extrinsic incubation
period is modeled using experimental data and measured transmission rates
(human to mosquito) \cite{nish07}.

The IBM model produced is critically discussed. We show that it can be mapped
exactly into a stochastic compartmental model of a novel form (see Section IV)
thus crossing for the first time the valley separating IBM from compartmental
models.  This result opens new perspectives which we also discuss in Section
IV. Finally, Section V presents the conclusions of this work.

\section{The model}
 It is currently accepted that the dengue virus does not make any effect to the
vector.  As such, \emph{Aedes aepgypti} populations are independent of the
presence of the virus.  In the present model mosquito populations are produced
by the \emph{Aedes aegypti} model \cite{oter08} with spatial resolution of one
{\em block} using climatic data tuned to Buenos Aires, a temperate city where
dengue circulated in the summer season 2008-2009 \cite{seij09}. The urban unit
of the city is the {\em block} (approximately a square of 100m x 100m).
Because of the temperate climate the houses are not open as it is often the
case in tropical areas.  Mosquitoes usually develop in the center of the block
which often presents vegetation and communicates the buildings within the
block. The model then assumes that mosquitoes belong to the block and not to
the houses and they blood-feed with equal probability in any human resident in
the block. \emph{Aedes aegypti} is assumed to disperse seeking for places to
lay eggs. The mosquito population, number of bites per day, dispersal flights
and adult mortality information per block is obtained from the mosquito model
\cite{oter08}.

The time-step of  the model has been fixed at one day.  The human population of
each block is fixed in the present work and the disease is spatially spread by
the mosquito alone. The evolution of the disease in one individual human,
$h$, proceeds as follows:
\begin{list}{}{}
\item{Day $d=d_0$} The virus is transmitted by the bite of an infected
mosquito
\item{Day $d=d_0+\tau_E(h)$ } The human $h$ becomes infective ($h$ is said to
be exposed to the virus during this period of time).
\item{Day $d=d_0+\tau_E(h)+j$} For $1 \le j \le \tau_I$ the human $h$ is
infective and transmits the virus to a biting mosquito with a probability
$p_{hm}(j)$. $\tau_I$ indicates the duration in days of the viremic window.
\item{Day $d > d_0+\tau_E(h) + \tau_I $} The human $h$ is recovered and no
longer transmits dengue.
\end{list}
The cycle in the human being is then of the form Susceptible, Exposed,
Infected, Recovered (SEIR).

The virus enters the mosquito when it bites a viremic human with a probability
$p_{hm}(j)$ depending of the day $j$ in the infectious cycle of the specific
human bitten.  The cycle continues with the reproduction of the virus within
the mosquito (extrinsic period), lasting $\tau_m$ days (in this work $\tau_m$ 
was set to 8 days). After this reproduction
period the mosquito becomes infectious and transmits the virus with a
probability $p_mh$ when it bites. The mosquito follows a cycle Susceptible,
Exposed, Infected (SEI) and does not recover.
\cite{gubl98,fock95,nish07,oter10}.  The adult female mosquito population as
produced by the \emph{Aedes aegypti} simulation is then split into susceptible,
$\tau_m$ stages of exposed and one infective compartment according to their
interaction with the viremic human population and the number of days elapsed
since acquiring the virus.

The epidemic starts when one or more humans become viremic. The algorithm
followed is:

\begin{enumerate}
\item Give individual attributes, $\tau_{E}(h)$ according to prescribed
distribution.
\item Read geometry, size of viremic window, probabilities
$p_{mh}(j), \:j\!=\! 1...\tau_I $, human population 
in each block, day of the year when the epidemic starts.

\item Initialise the blocks with the human population.

\item Read total adult female mosquito population of the day ($M$), bites,
flights to neighbouring blocks and mosquito death probability. Initialise
all mosquitoes as susceptible ($MS$). Set infective bites to zero.

\item Day-loop begins:

\begin{enumerate}
\item\label{loop}  Calculate the amount of surviving infected mosquitoes. Compute surviving
mosquito population with $d$ exposed days, age them by one day.
Exposed mosquitoes evolve to infected ones after $d=\tau_m$ days
(Use binomial random number generator).
\item Compute probability for a mosquito bite to transmit dengue as $p_{minf}=p_{mh}M_{I}/M$
(compound probability of being infective and being effective in the
transmission). Each bite is an independent event according to the
underlying mosquito model.
\item Compute the probability for a bite to be made by a susceptible mosquito
out of all the non-infective bites $p_{otras}=M_{S}/(M (1-p_{minf}))$.  The
non-infective bites have probability $(1-p_{minf})$, and come from infected
mosquitoes failing to transmit the virus, exposed and susceptible mosquitoes,
$M (1-p_{minf})=M-p_{mh}M_{I}=M_{S}+M_{E}+(1-p_{mh})M_{I}$.
\item Calculate the number of infected and susceptible mosquito bites
using binomials with the previous probabilities and the number of
total bites.
\item Compute number of humans bitten in each day of 
the infected state,  $H_I(j)$.
\item Compute the number of new exposed mosquitoes, taking into account that the probability
of human-mosquito contagion is dependent on the stage-day of the human
infection. The amount of new infected insects is chosen using binomials. For
this purpose, we add the results of the calculation of the number of
susceptible mosquitoes that bite humans and get infected, 
$M_{NE}=\sum_{j=1}^{\tau_I} Bin(p_{hm}(j),H_I(j))$. 
Where $M_{NE}$ are the new exposed mosquitoes,
$Bin(H_I(j),p_{hm}(j))$ is a binomial realization with the day-dependent
probability $p_{hm}(j)$ and $H_I(j)$ the quantity of infected humans bitten by
susceptible mosquitoes on infection day $j$.
\item\label{tablita} Perform a random equi-distributed selection of humans 
bitten by infectious
mosquitoes. Build a table of bitten individuals.
\item Update the state of all the humans. If the human belongs to the
Susceptible state and has been bitten according to the table built in (\ref{tablita}) 
then change the state of selected human to Exposed, record 
$d_0$ for each exposed human.
Susceptible individuals not bitten by an infective mosquito will remain as such
and consequently their intrinsic time $d$ remains in $0$. Increase the
intrinsic time of Exposed and Infected humans by one. Those Exposed individuals
for which their intrinsic time has surpassed the value $d=d_0+\tau_{E}(h)$ are
moved to the Infected state, while those Infected 
individuals whose intrinsic time
is larger than $d=d_0+\tau_{E}(h)+\tau_{I}$ , are moved to the Removed state
\begin{enumerate}
\item Compute the number of individuals in each state for every cell.
\item Read the total adult female mosquito population of the day ($M$), bites,
flights to neighboring blocks. 
\end{enumerate}
\end{enumerate}
\item Repeat all over again from (\ref{loop}). Each iteration is a new day of
the simulation. 
\end{enumerate}

\section{Epidemic dependence on the distribution of the exposed period}

We implemented four different distributions for the duration of the exposed
period assigned to human individuals: Nishiura's experimental distribution
(\cite{nish07}), a delta and exponential distributions with the same mean that
the experimental one and an exponential distribution with the same median than
the experimental one. We call them N,D,E1 and E2 respectively.
\begin{figure}[f]
\includegraphics[width=8cm,angle=-90]{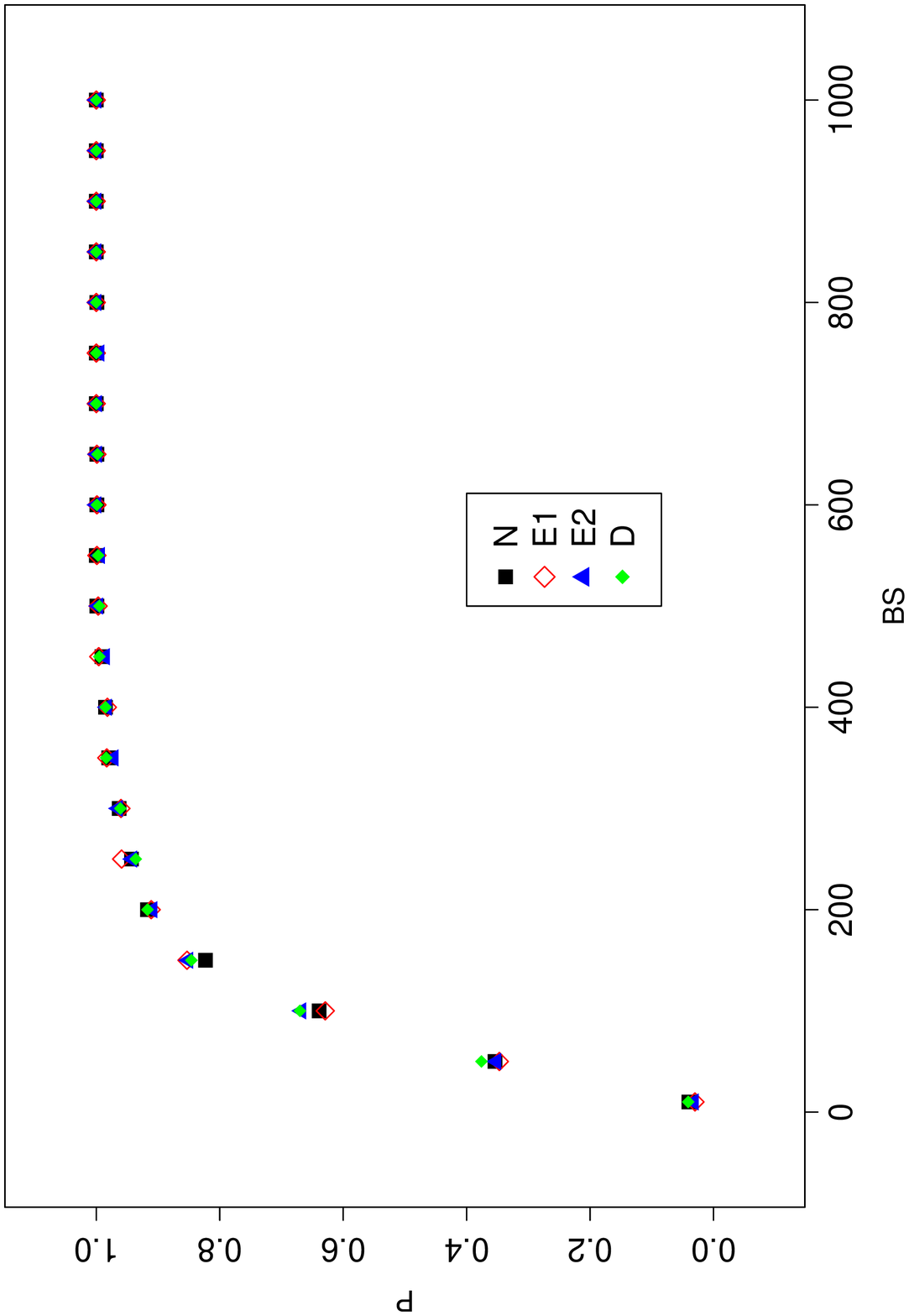}
\includegraphics[width=8cm,angle=-90]{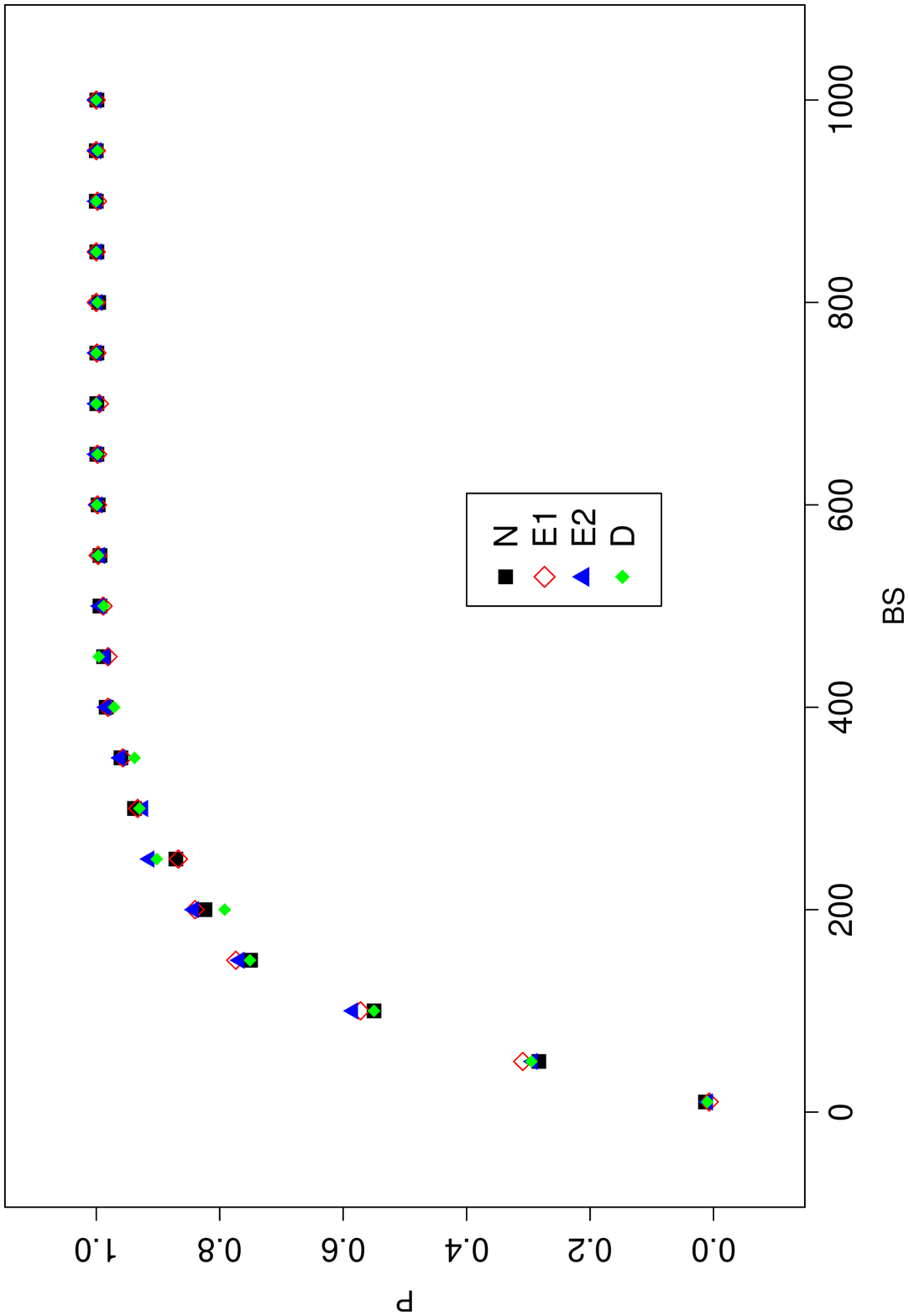}
\caption{Probability of local circulation of virus (probability of having at least
one secondary case). Top: with a tropical temperature scenario. Bottom: with a
temperated climate.\label{f1}}
\end{figure}

The study was performed in two different climatic scenarios, one with constant
temperature of 23 degrees Celsius, that represents tropical regions and one
with the mean and amplitude characteristic of Buenos Aires, a city with
temperate climate. The number of effective breeding sites \cite{oter06} was
varied between 50 and 1000.
\begin{figure}[f]
\includegraphics[width=10cm,angle=-90]{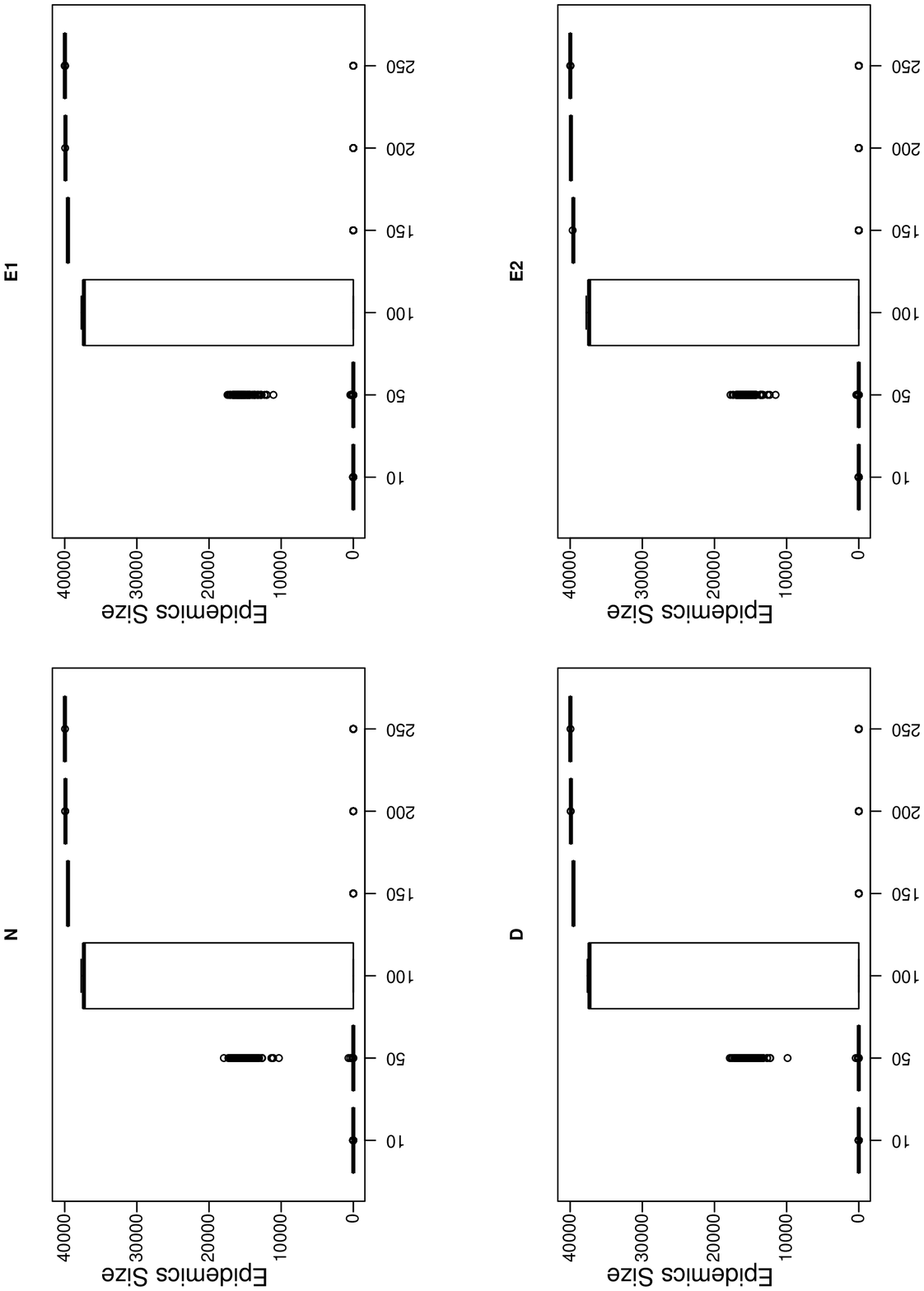}
\caption{Box plot graphs for the epidemic size (total number of infected humans)
at constant temperature. Top-left: N-distribution, top-right: E1-distribution, bottom-left
D-distribution and bottom-right: E2-distribution.\label{f2}}
\end{figure}

The main questions were: considering the total number of recovered individuals,
how is the distribution of epidemic sizes influenced by the choice of
distribution?  How does the probability for having no secondary cases change?
How is the predicted duration of the epidemic influenced by our choices? And
finally, how is the distribution of time between epidemiologically related
cases affected?

Before showing our results, it is worth to realise that the probability of
having no secondary cases will not be sensitive to the choice of distribution
for the exposed period, as this probability depends only on the probability of
the introduced case being bitten by the mosquitoes, the probability of the
mosquitoes of acquiring the virus, surviving the extrinsic period and finally
transmitting the virus to a human in a bite. Nothing in this process depends on
the choice of  distribution.  In contrast, we expect the distribution of times
between epidemiologically related cases to depend strongly on the choice of
distribution, since it reflects the sum of the two incubation periods
(intrinsic and extrinsic).

The simulations were performed using identical mosquito populations  in all
cases (same data file), for an homogeneous urban area of 20 by 20 blocks,
hosting 100 people per block, making a total of 40000 human beings. Hence, all
differences correspond to the disease dynamics that was previously identified
as the main source of stochastic variations. The simulations with temperate
climate were started on January 1st, i.e., ten days after the summer solstice
(December 21st in the southern hemisphere).  The reported statistics is
computed by averaging the outcomes of 1000 simulations with different seeds for
the pseudo-random routines.

Confirming the reasoning above, there is no sensitivity for the probability of
having local circulation of the virus (defined as having at least one secondary
case following the introduced case) with the statistical distribution of the
exposed period, see Figure \ref{f1}.

The size of the epidemic at constant temperatures makes a transition from very
small outbreaks to a large outbreak reaching almost all the people in the
simulation. The transition happens in the region 50-100 breading sites per
block for the four distributions studied, see Figure \ref{f2}. We detected no
epidemiologically important differences produced by the use of one or another
distribution function.

The size of epidemics with seasonal dependence is presented in Figure \ref{f3}.
The size of the epidemic outbreak begins to increase with the number of
breeding sites in the 100-150 region for all the distributions, suggesting that
$R_0$ (the basic reproductive number) is not affected by the statistic of
exposed times (conceptually, $R_0$ is the average number of secondary
cases produced by a single case when the epidemic starts).
The results for the D-distributions and N-distribution do not present
differences.  The E1-distribution (equal mean) overestimates the final size
while the E2-distribution substantially agrees with the N (experimental) one.
Both exponential distributions present a larger variance than the
N-distribution.  This difference may matter when worst-possible scenarios are
considered.
\begin{figure}[f]
\includegraphics[width=10cm,angle=-90]{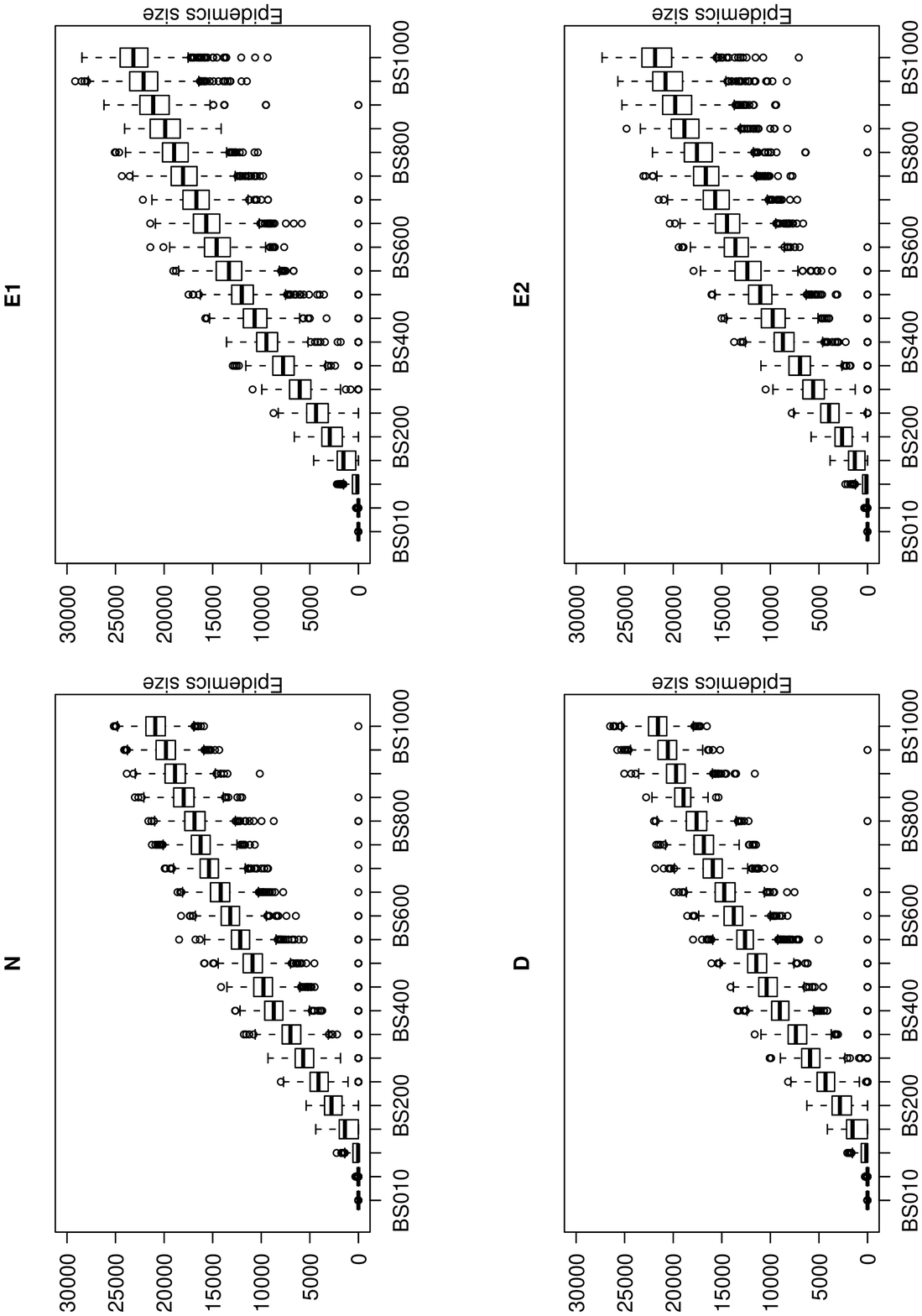}
\caption{Box plot graphs for the epidemic size with seasonal dependence.
Top-left, N-distribution, top-right, E1-distribution, bottom-left
D-distribution and bottom-right, E2-distribution.\label{f3} In x-axis 
number of breading sites, BS, per block.}
\end{figure}

It is possible to argue that the differences in epidemic size observed for
different distributions in the case with seasonal dependence correspond to a
faster evolution of the epidemic outbreak during the time-window of favourable
conditions.  On the other hand, in the constant temperature scenario the
epidemic outbreak stops because of the decrease in susceptible people produced
by the epidemic. In this case we observe no significant differences, see Figure
\ref{f4},  the epidemic size always ends up around 40000 individuals. However,
the epidemics evolves faster the larger the number of breeding sites (rightmost
plots).  It is worth observing that major outbreaks last more than one year in
this 40000 people urbanization. Since there is no human movement incorporated,
the duration of the outbreak depends critically on the dispersion of female
mosquitoes. 
\begin{figure}[f]
\includegraphics[width=2.2cm,angle=-90]{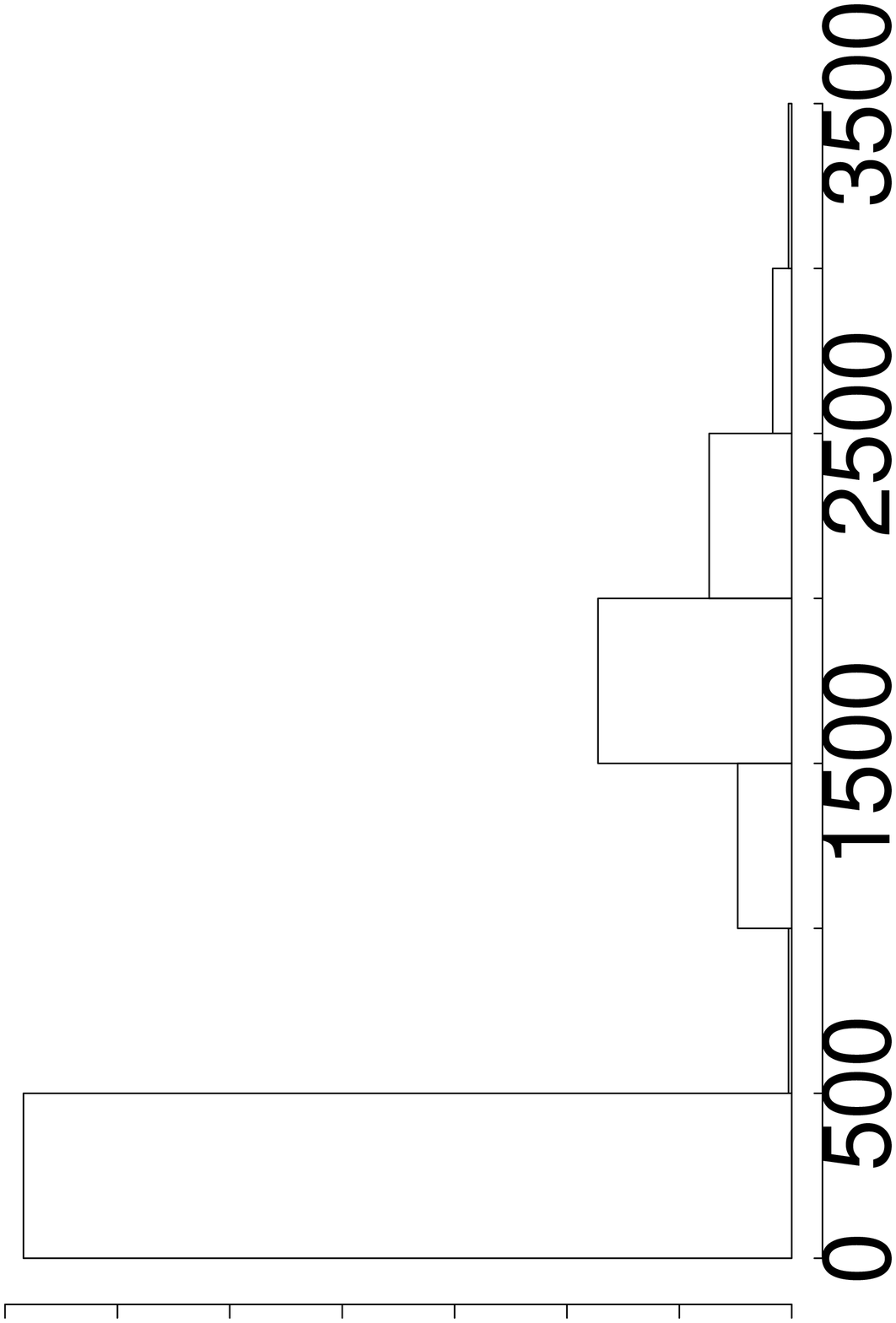}
\includegraphics[width=2.2cm,angle=-90]{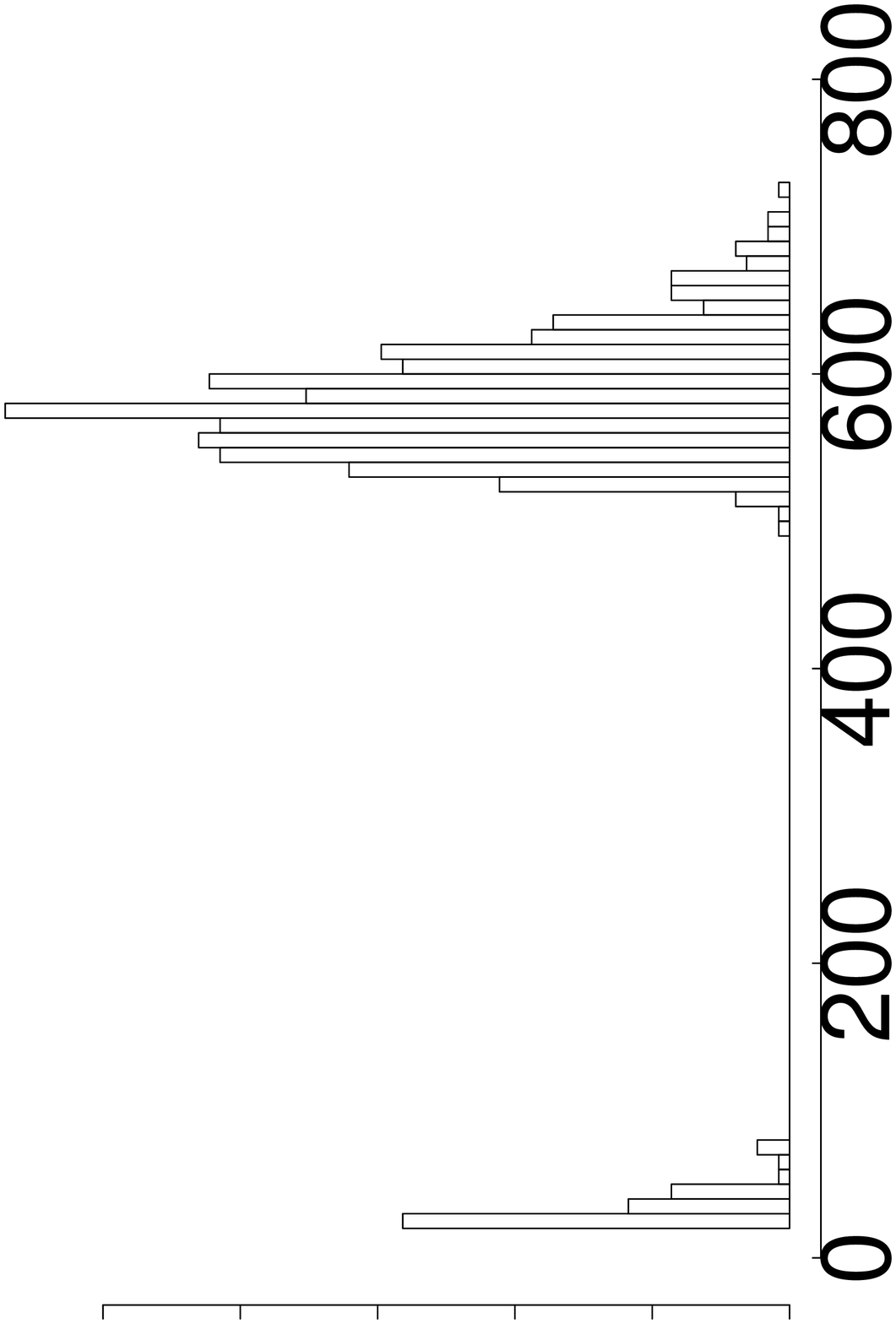}
\includegraphics[width=2.2cm,angle=-90]{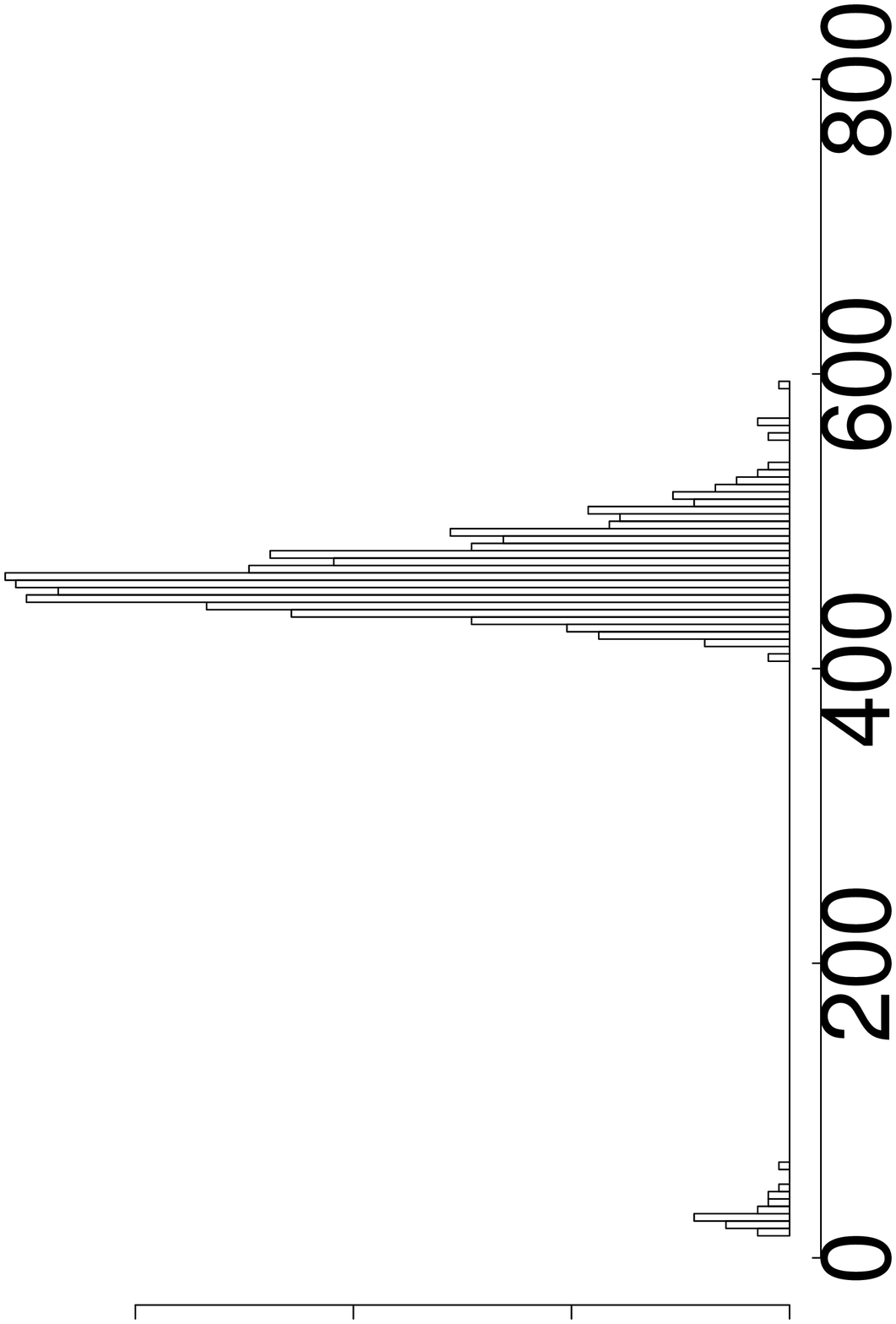}
\includegraphics[width=2.2cm,angle=-90]{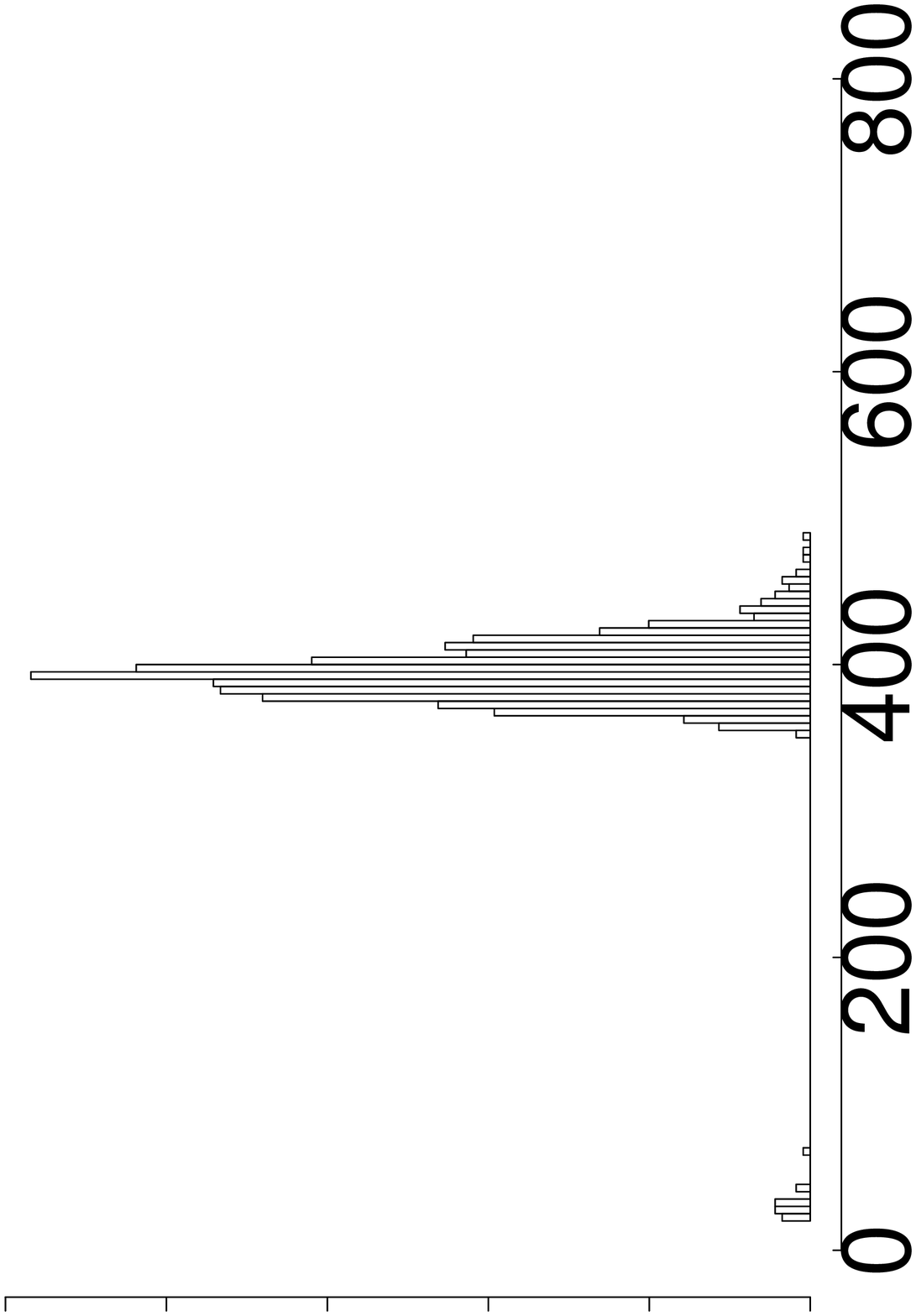} \\
\includegraphics[width=2.2cm,angle=-90]{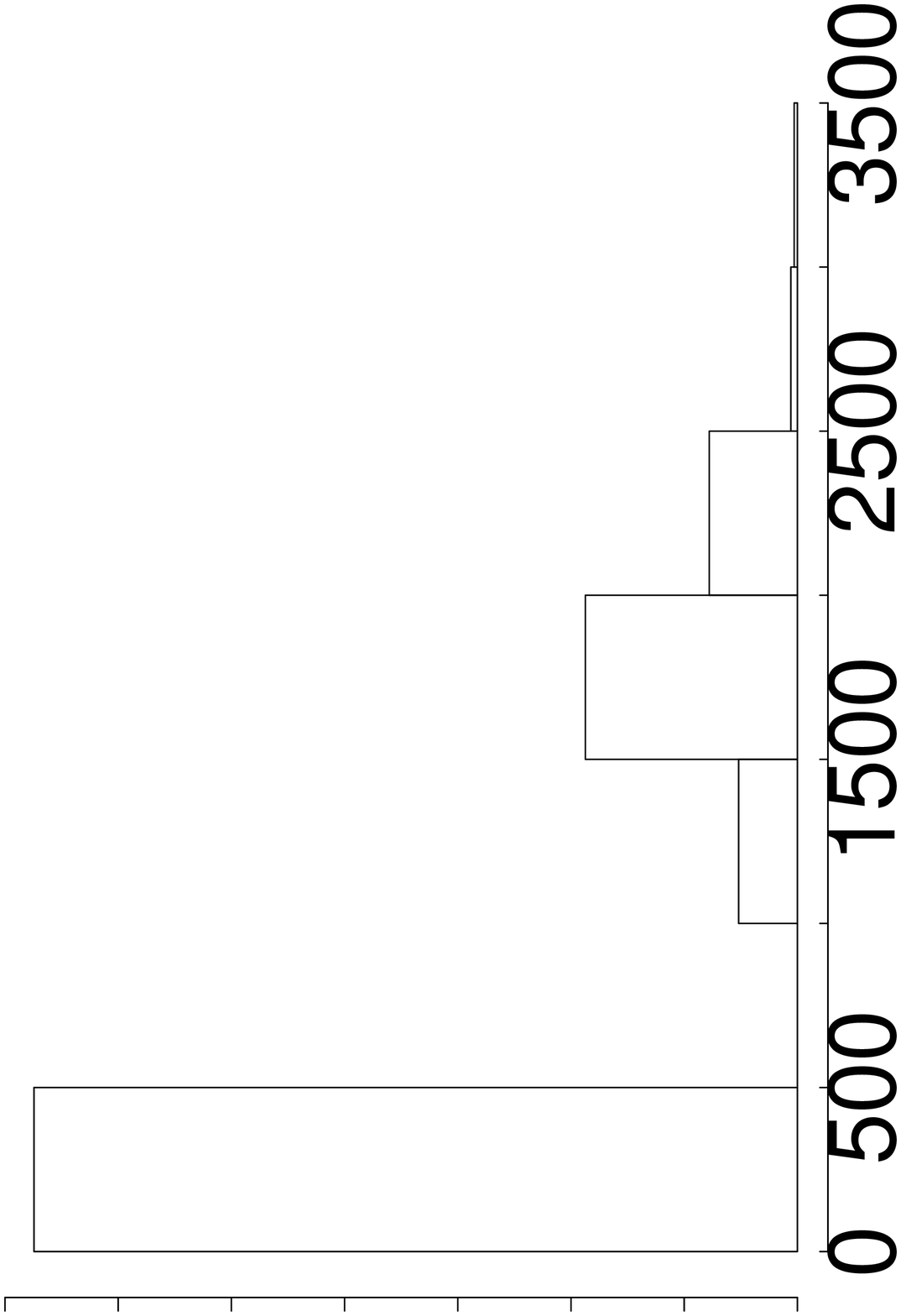}
\includegraphics[width=2.2cm,angle=-90]{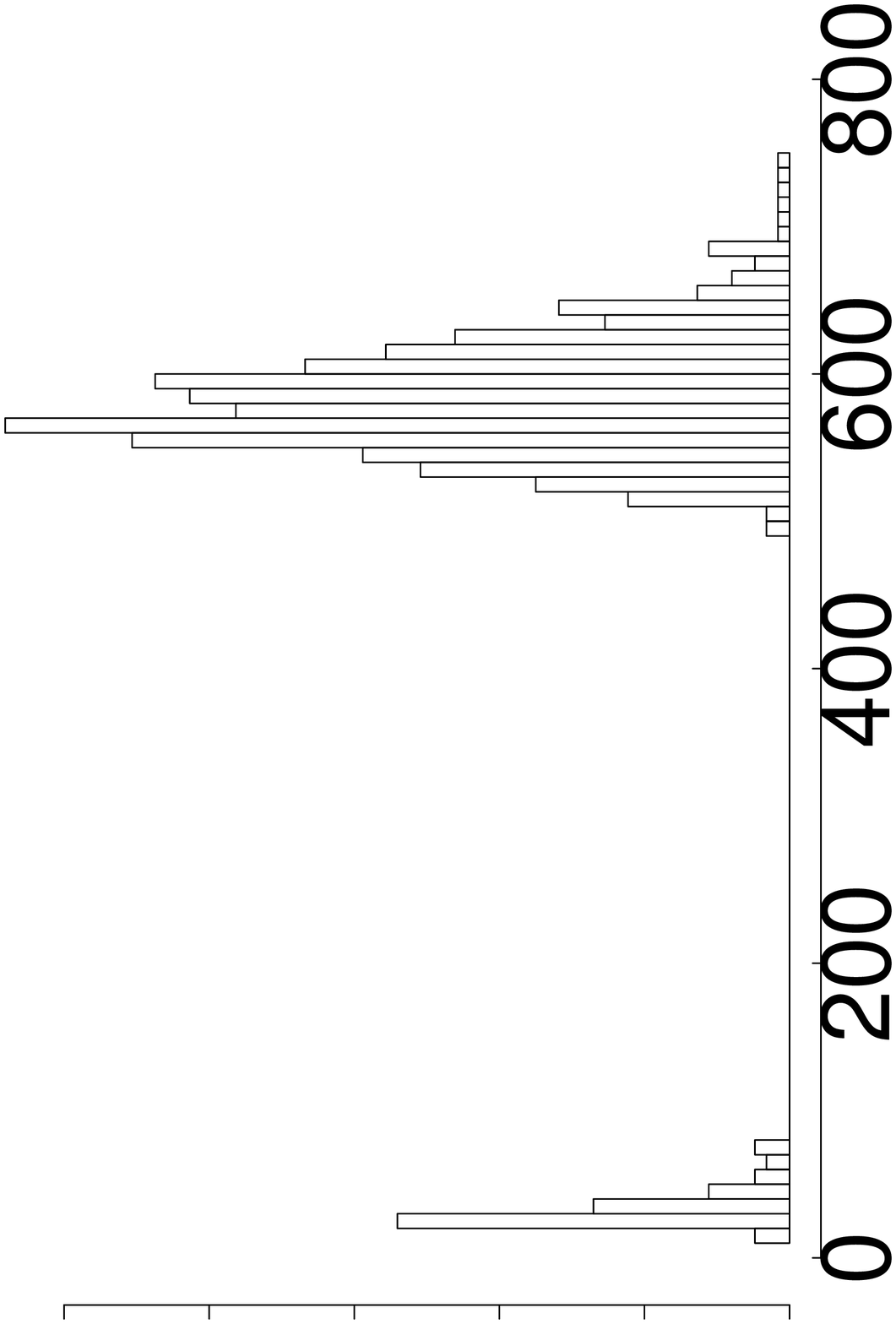}
\includegraphics[width=2.2cm,angle=-90]{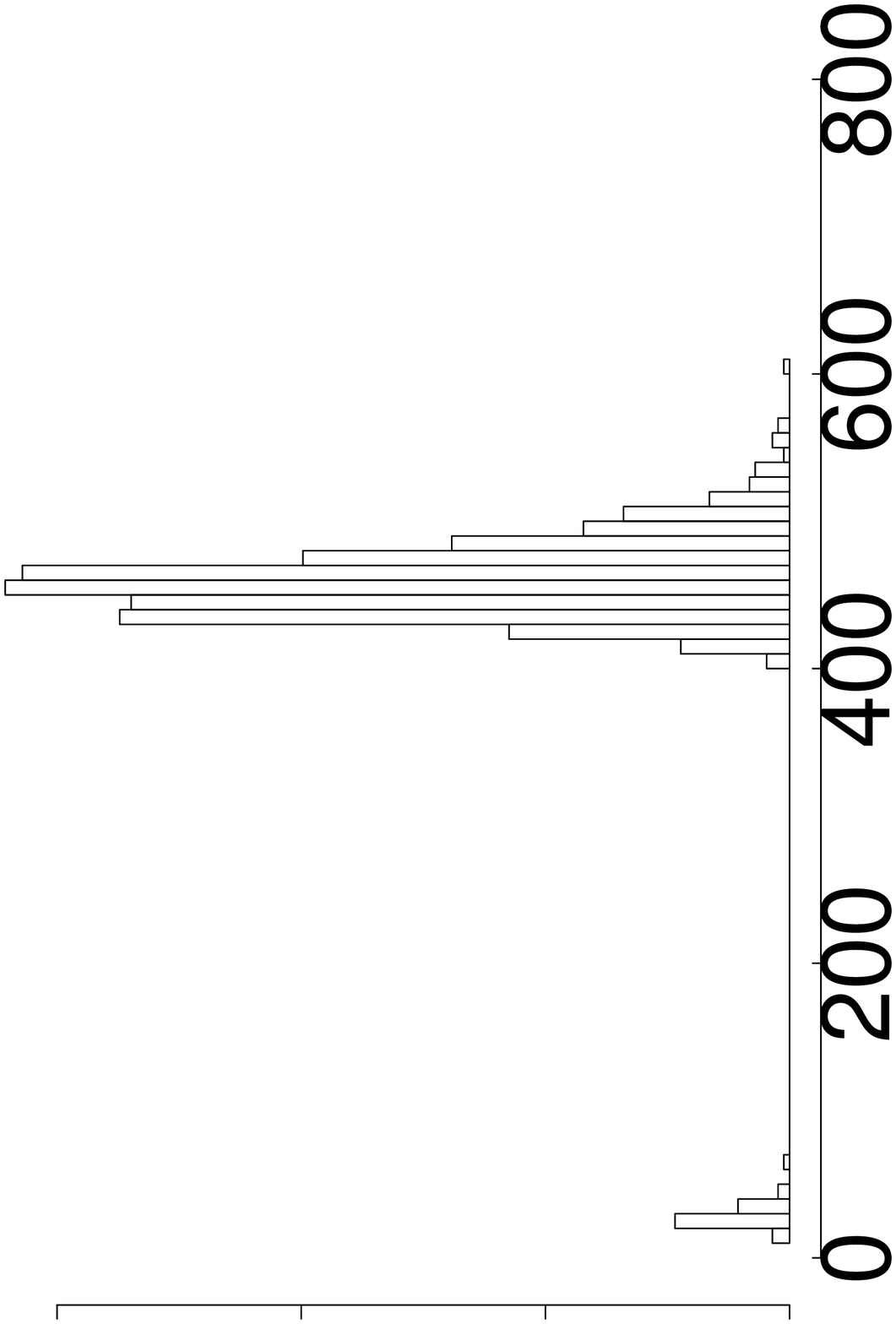}
\includegraphics[width=2.2cm,angle=-90]{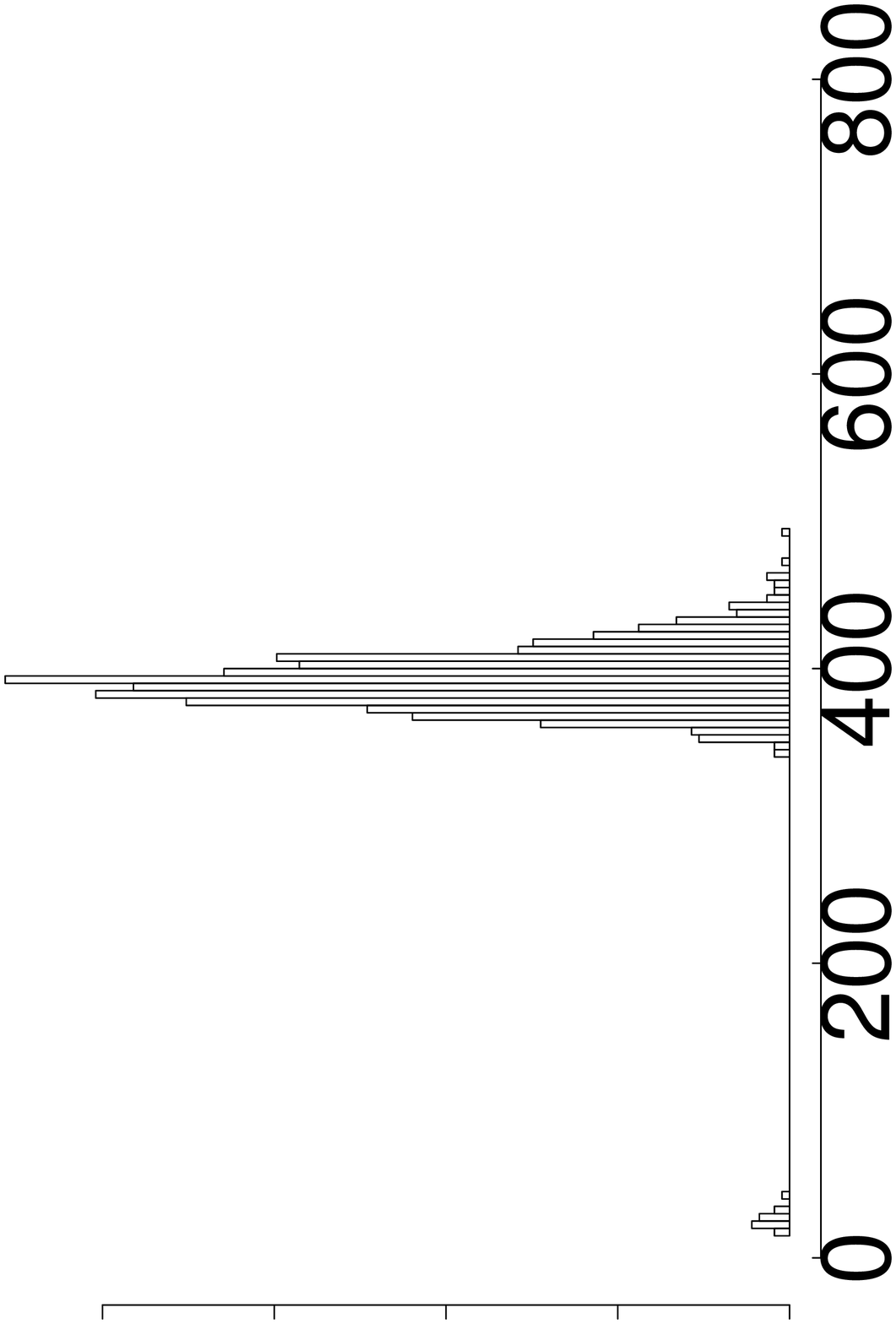} \\
\includegraphics[width=2.2cm,angle=-90]{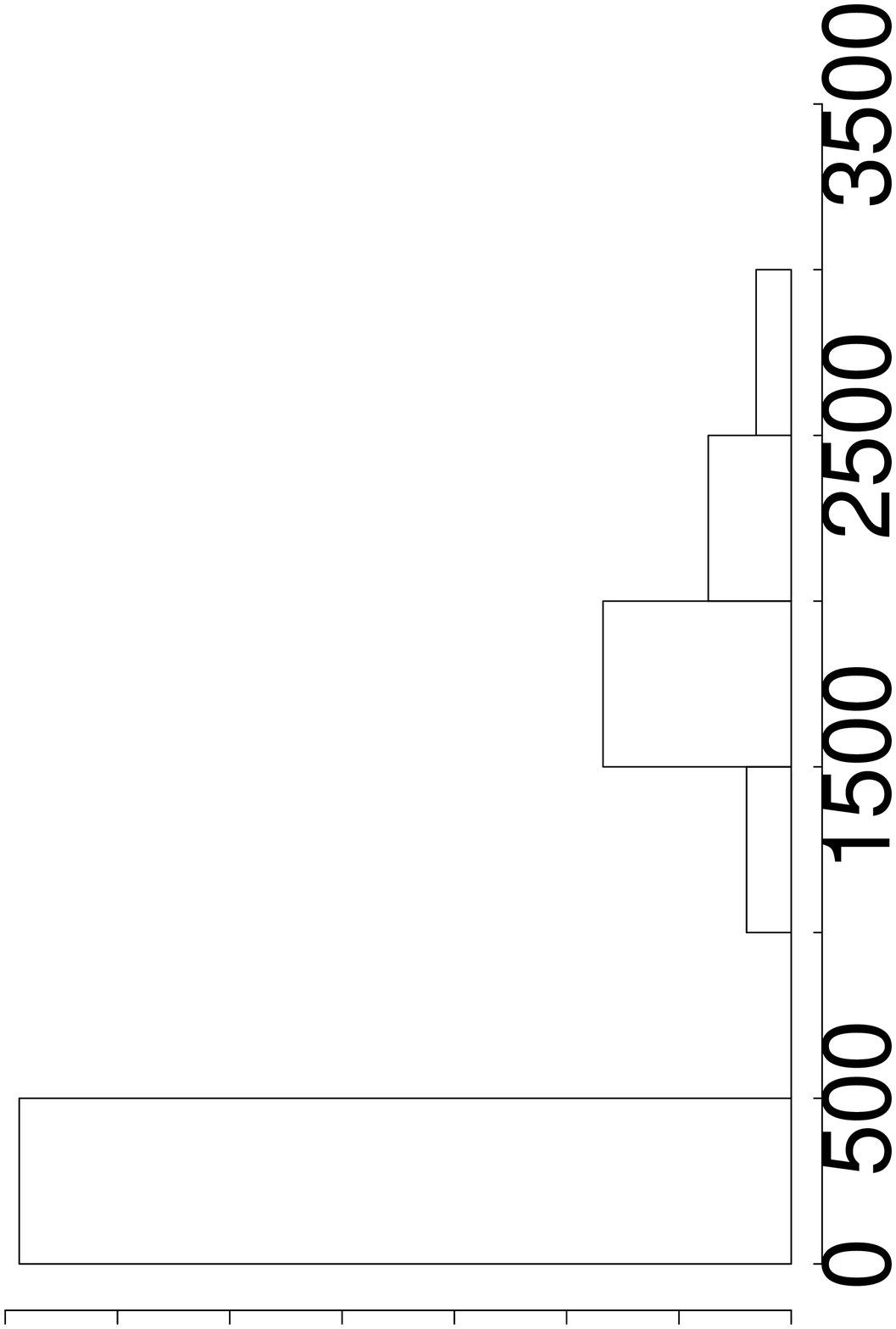}
\includegraphics[width=2.2cm,angle=-90]{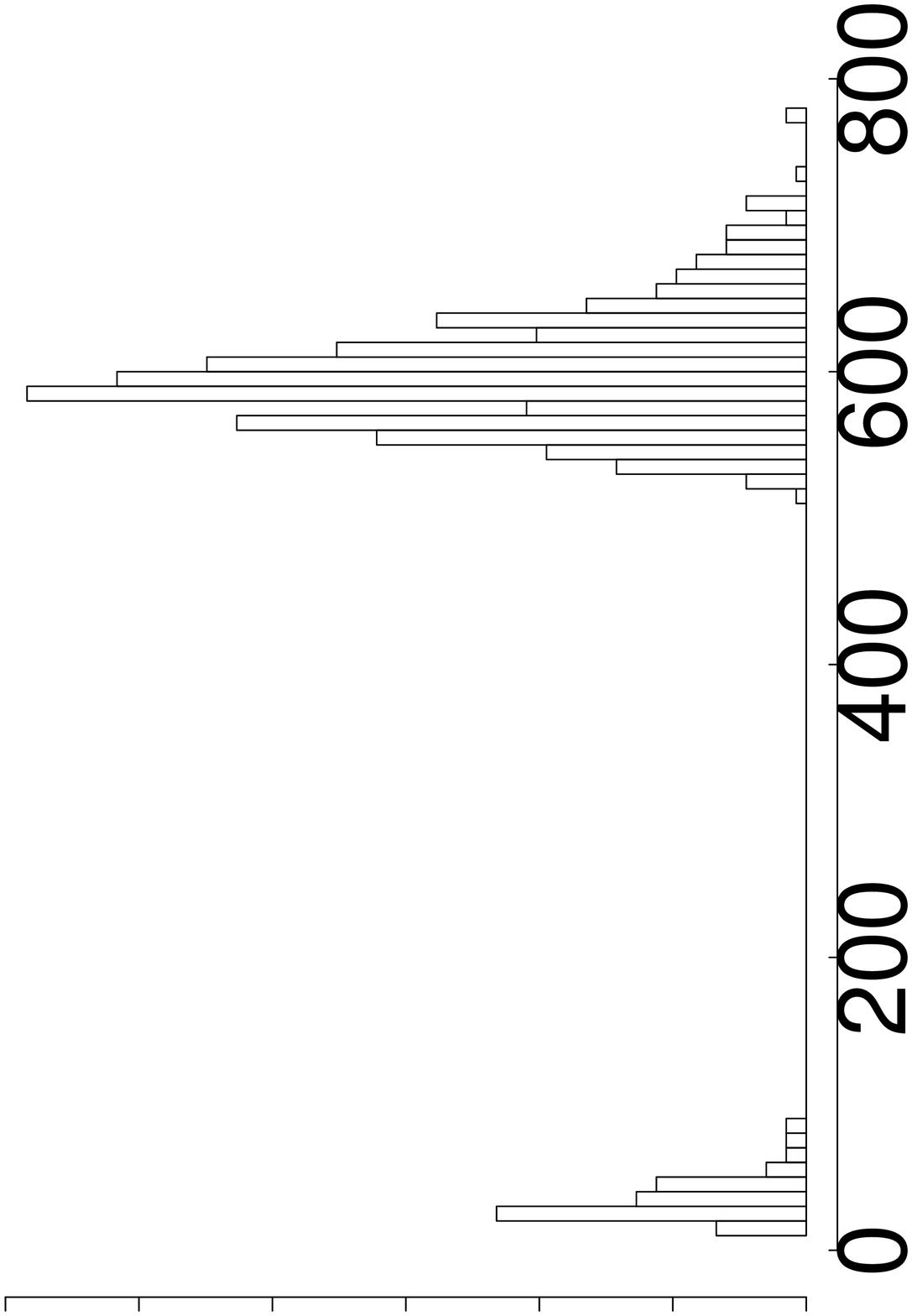}
\includegraphics[width=2.2cm,angle=-90]{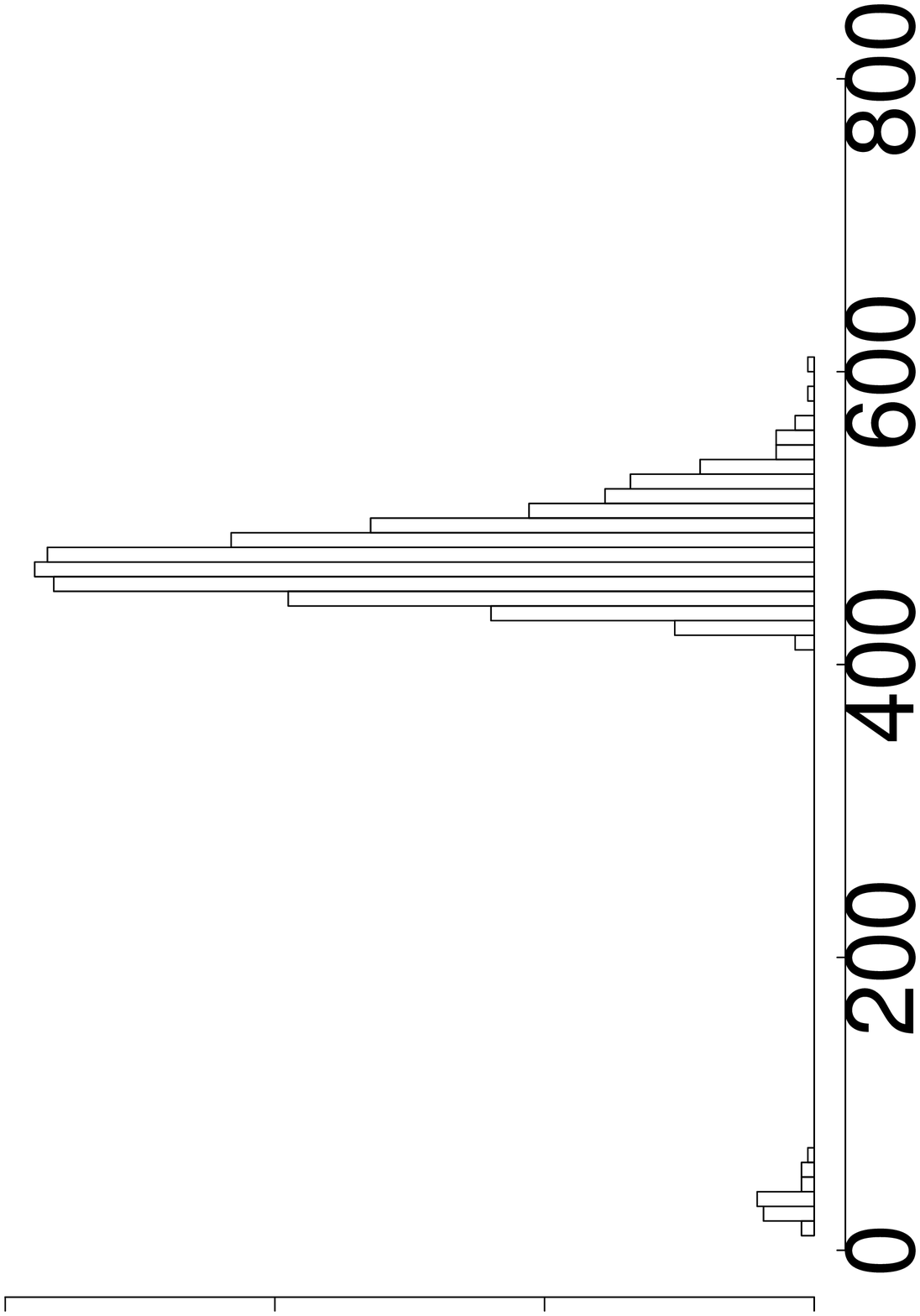}
\includegraphics[width=2.2cm,angle=-90]{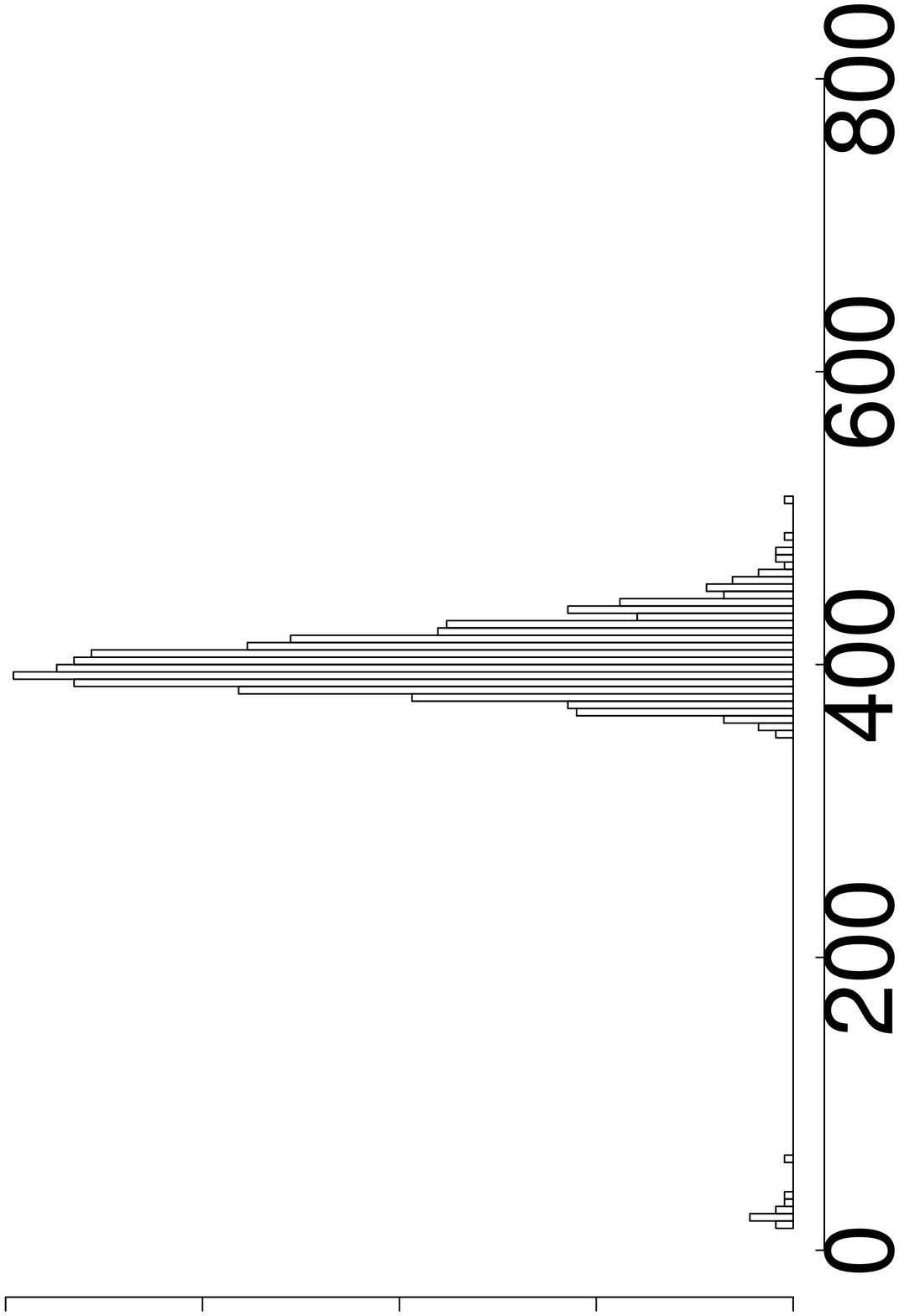} \\
\includegraphics[width=2.2cm,angle=-90]{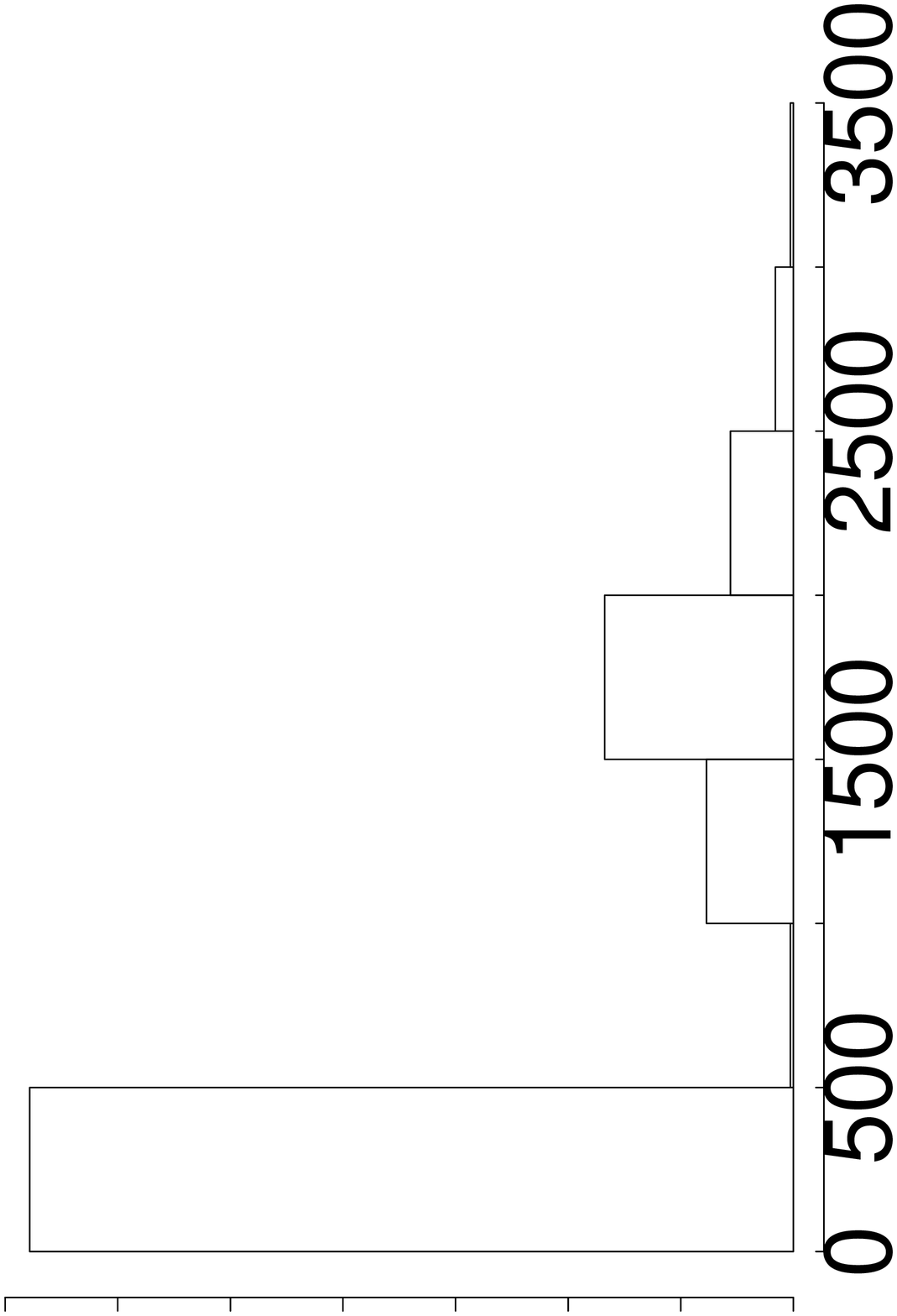}
\includegraphics[width=2.2cm,angle=-90]{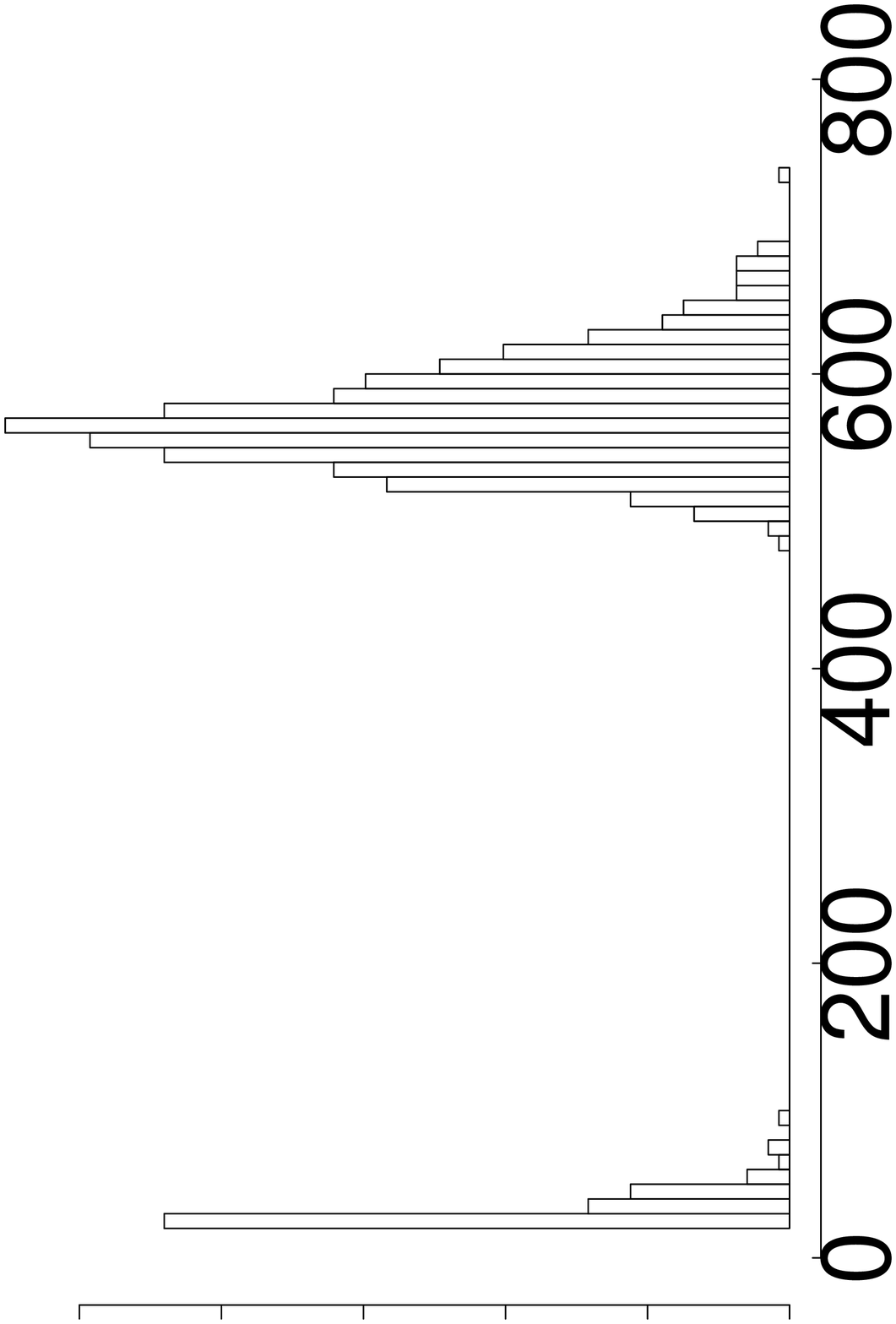}
\includegraphics[width=2.2cm,angle=-90]{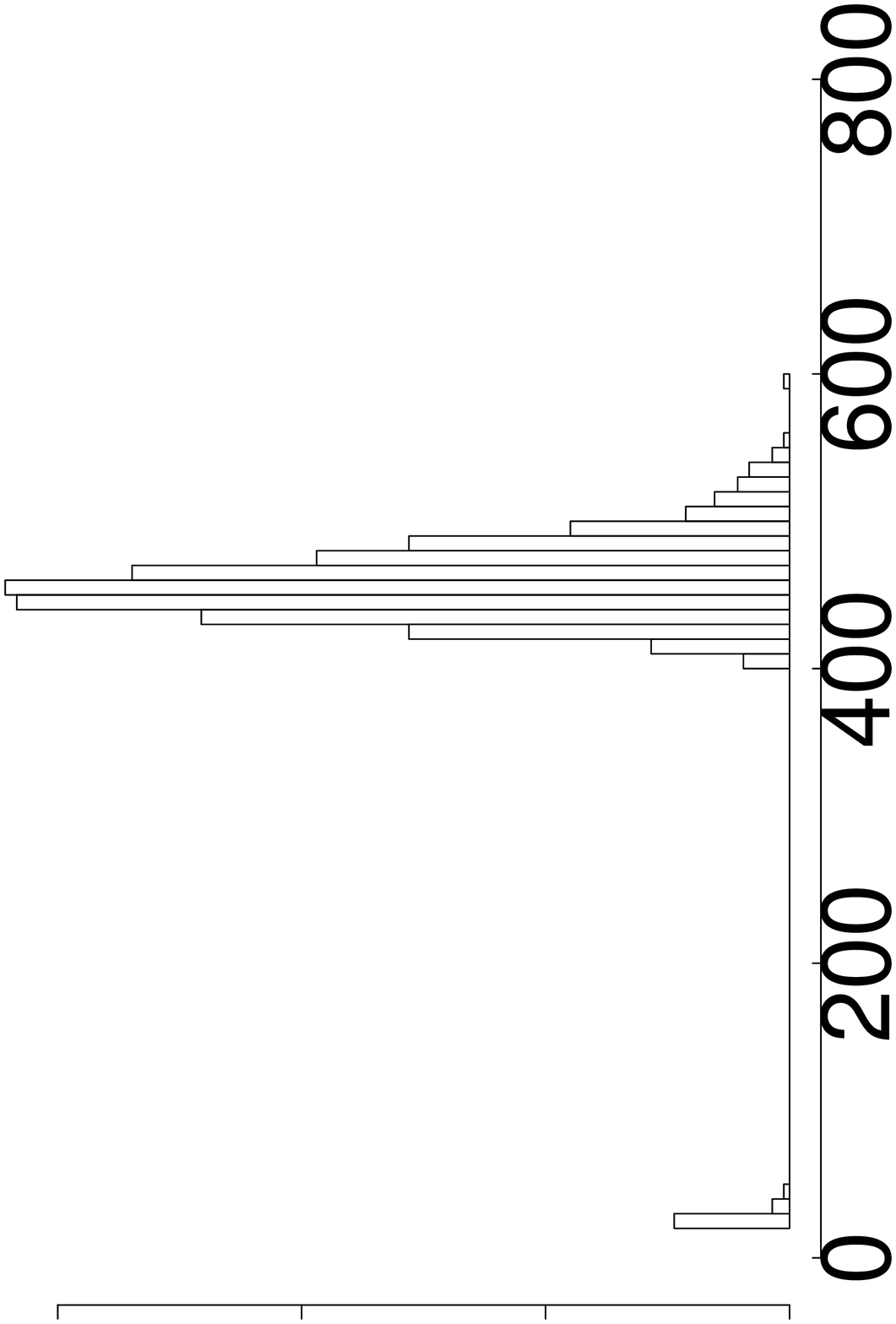}
\includegraphics[width=2.2cm,angle=-90]{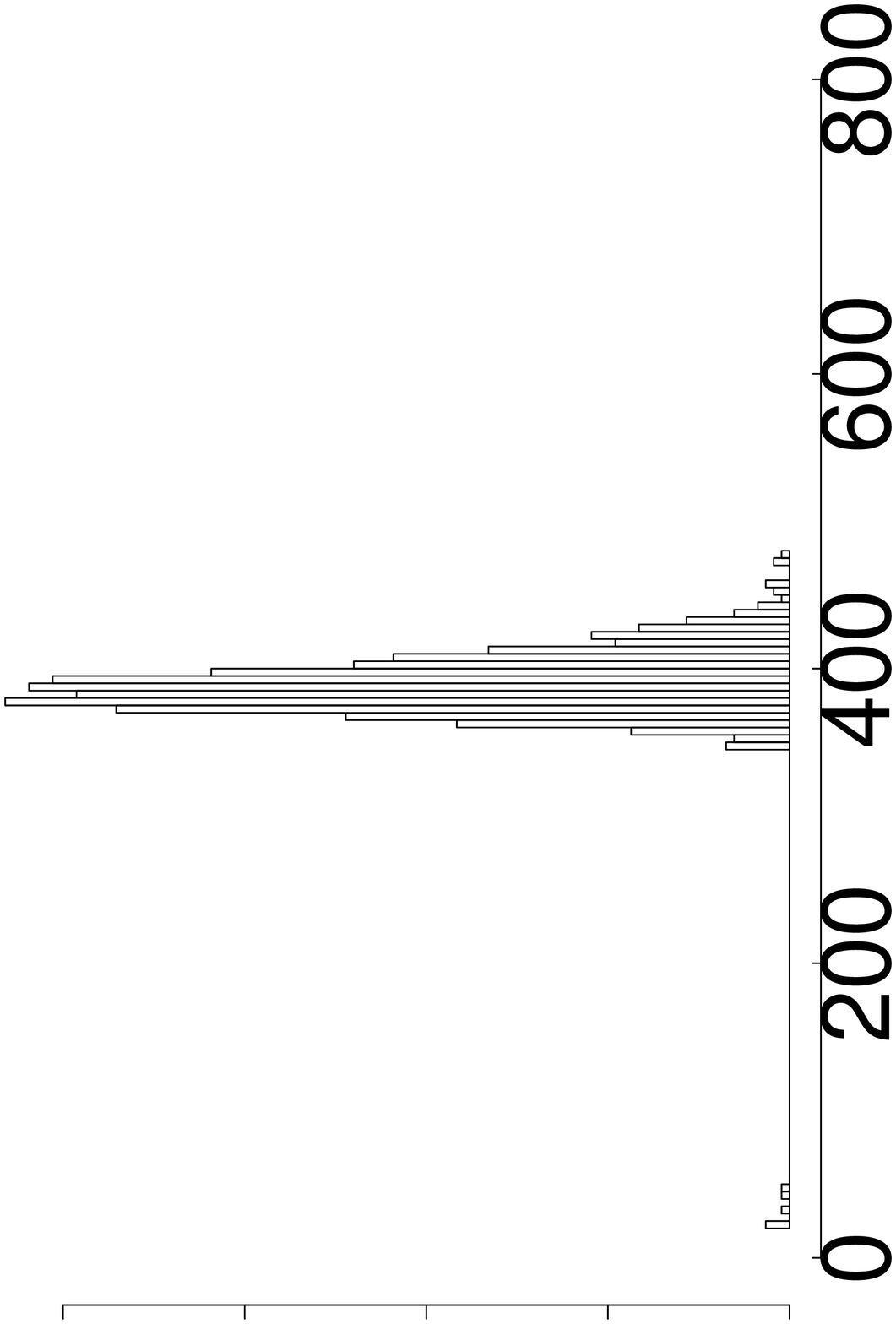} 
\caption{Statistics for the duration (in days)  of the epidemic outbreak at
constant temperature for different number of breeding sites, left to right 50,
100, 150 and 200 BS per block. Top line, N-statistics, second line
E1-statistics, third line E2-statistics and bottom line
D-statistics.\label{f4}}
\end{figure}

At this point we could ask: is there any relevant statistics that depends on
the distribution of exposed times? The answer is yes. Assume a case of imported
dengue is detected, how long do we have to wait to know if we are facing an
outbreak or not? The time elapsed between the primary and the first secondary
case is given by adding the extrinsic incubation period and the exposed time.
Box plots produced after 1000 simulations of outbreaks wit a mosquito
population supported on 200 BS per block are displayed in Figure \ref{wait}. We
observe that in this case the exponential distributions exaggerate the
dispersion of results producing too early as well as too late cases as compared
with the experimental distribution while the delta-distribution compresses the
``alert window'' too much with respect to the experimental distribution.
\begin{figure}[f]
\includegraphics[width=9cm,angle=-90]{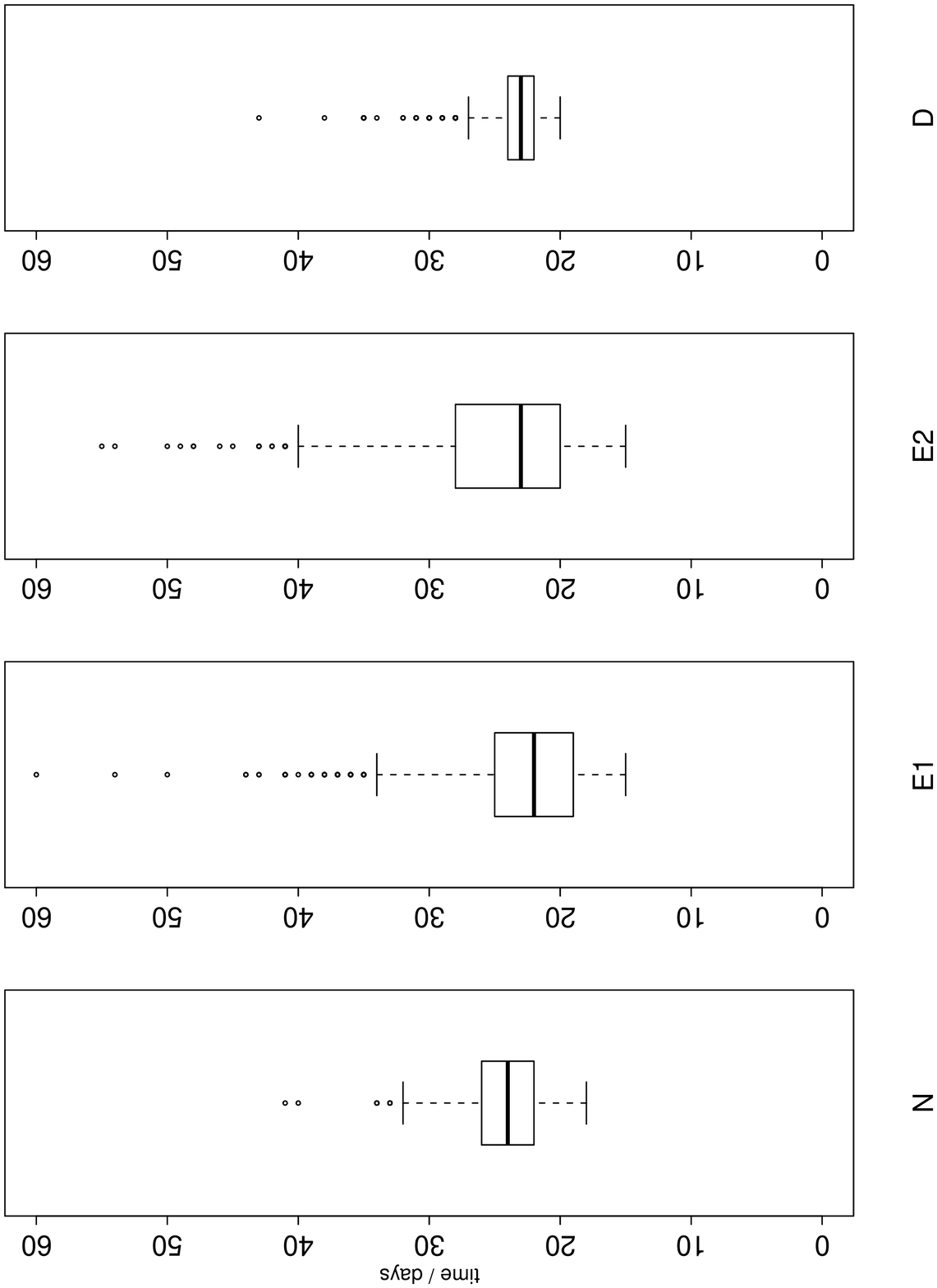}
\caption{Distribution of times between the arrival of an infectious person and
the time for the first secondary case. The scenario corresponds to 200BS/block
at constant temperature of 23ºC and 100 people per block. Results correspond to
the delta-distribution (D), the experimental distribution (N), the exponential E2
and E1 distributions.\label{wait}}
\end{figure}

\section{Is the IBM model an implementation of a compartmental model?}
The use of IBM in epidemiology is usually advocated on an
ontological perspective \cite{koop99,bian04}, we quote the argument in \cite{bian04}
\begin{quote}
    \dots the epidemiology literature has always described an infection history
as a sequence of distinct periods, each of which begins and ends with a
discrete event. The critical periods include the latent, the infectious, and
the incubation periods. The critical events include the receipt of infection,
the emission of infectious material, and the appearance of symptoms.\dots
Several implications can be drawn from these epidemiological principles and
serve as a conceptual model for an individual infection process. First, it is
most appropriate to represent the individual infection as a series of discrete
events and periods. Second, the discrete events have no duration in their own
right and only trigger the change between infection periods. Third, the
discrete periods indicate the infection status of an individual and are part of
the individual's characteristics.
\end{quote}

The evolution of dengue at the individual level is described in detail in the
literature, including experimental results \cite{nish07} which are seldom
available for other illnesses. Thus, dengue is a good case to put the thesis at
test.

The description in terms of events immediately calls for stochastic population
models, while the different human to mosquito transmission probabilities can be
handled easily introducing age structure in the infective human population,
The description of the proper probability distribution for the exposed (latent)
period is the major obstacle towards a compartmental stochastic model.

Each human individual spends $k$ days, $k\in\{1\dots \tau_E\}$, in the exposed
state before becoming infective. The probability of any individual to spend $k$
days is $P(k)$. Hence, in the group of $N$ individuals that become infected at
$d=d_0$, a portion $N(k)$ will spend $k$ days before they become infective,
where $N(k)$ is a random deviate taken from $Multinomial(N,P(1),\dots ,P(\tau_E))$, 
a multinomial distribution. This is the only information available
to the simulation. IBM generate additional structure, since each individual is
assigned to one of the classes $N(1),\cdots,N(\tau_E)$ in an entirely random
way. This apparent additional information is in fact arbitrary because of the
randomness and it is averaged out when presenting the results.  Although the
algorithm of the IBM model assigns the time spent in the exposed class to the
individual, the final cause of having a distribution of exposed times is not
known (neither to us nor to the algorithm). Different exposed times may arise
either because of differences (in virus resistance) in the exposed individuals
or because of differences in the incoming virus due e.g., to 
biological processes
concerning the development and transmission of the virus by the mosquito.

Let us define
\begin{eqnarray}
\hat{P}(k) &=& \sum_{j=k}^{\tau_E} P(j) \\
Q(k) &=& \frac{P(k)}{\hat{P}(k)} = P(j=k / j \ge k)\\
\hat{N}_k &=& N - \sum_{j=1}^k N(j) 
\end{eqnarray}

Clearly $Q(k) \le 1$ and $Q(\tau_E)=1$. Consider the numbers $E(1)=N$,
$E(k)=Binomial(E(k-1),1-Q(k))$ and $N(k)=E(k-1)-E(k)$ for $2 \le k \le \tau_E$.

We will now recall a few elementary results regarding multinomial distributions.
We write $Multi$ for the multinomial distribution
\begin{lemma}{(Multinomial splitting)}
\begin{eqnarray}
&& Multi(N, p(1),\dots, p(k-1), p(k), \dots, p(\tau_E)) =\\
&=& Multi(N, p(1),\dots, p(k-1), \hat{P}(k))\: Multi(\hat{N}(k), Q(k),\dots, Q(\tau_E) ) 
\nonumber
\end{eqnarray}
\end{lemma}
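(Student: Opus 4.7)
The plan is to establish the identity by either (i) direct algebraic manipulation of the multinomial probability mass function, or (ii) a probabilistic two-stage sampling argument; the second is cleaner and I would present that, with the algebraic check as a sanity verification.

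First I would clarify the notation: $\hat{N}(k)$ in the second factor is the total count falling in the aggregated bucket $\{k,k+1,\dots,\tau_E\}$, i.e.\ $\hat{N}(k)=N-\sum_{j=1}^{k-1}N(j)=\sum_{j=k}^{\tau_E}N(j)$, and $\hat{P}(k)=\sum_{j=k}^{\tau_E}P(j)$ is the corresponding aggregated probability, so that $\sum_{j=k}^{\tau_E}Q(j)$ is consistent with a multinomial on $\hat{N}(k)$ draws (interpreting $Q(j)$ within the second factor as the conditional probability $P(j)/\hat{P}(k)$, which is what the sequential/iterated reading of the formula requires). With this reading the right-hand side is a well-defined product of a multinomial and a conditional multinomial.

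Next I would set up the two-stage experiment. Imagine $N$ independent trials, each assigned to a category in $\{1,\dots,\tau_E\}$ with probabilities $P(1),\dots,P(\tau_E)$; this realizes the left-hand side. Now group outcomes as ``in $\{1,\dots,k-1\}$ with their original labels'' versus ``in the aggregated bucket $\{k,\dots,\tau_E\}$''. For each trial, the probability of falling in category $j<k$ is $P(j)$, and the probability of falling in the aggregated bucket is $\hat{P}(k)$, and these $k$ aggregated categories exhaust the sample space. Hence the vector $(N(1),\dots,N(k-1),\hat{N}(k))$ is exactly distributed as $Multi(N,P(1),\dots,P(k-1),\hat{P}(k))$, giving the first factor. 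Conditional on $\hat{N}(k)=n$, the $n$ trials falling in the aggregated bucket are, by independence of the trials and the definition of conditional probability, independently distributed over $\{k,\dots,\tau_E\}$ with probabilities $P(j)/\hat{P}(k)=Q(j)$; thus their counts follow $Multi(\hat{N}(k),Q(k),\dots,Q(\tau_E))$, giving the second factor. Multiplication of the two factors recovers the joint law of $(N(1),\dots,N(\tau_E))$, which is the left-hand side.

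Finally, as a check, I would verify algebraically that the ratio of PMFs equals one: the product of the multinomial coefficients on the right telescopes, since $\binom{N}{N(1),\dots,N(k-1),\hat{N}(k)}\binom{\hat{N}(k)}{N(k),\dots,N(\tau_E)}=\binom{N}{N(1),\dots,N(\tau_E)}$, and the probability factors combine as $\prod_{j<k}P(j)^{N(j)}\cdot\hat{P}(k)^{\hat{N}(k)}\cdot\prod_{j\ge k}Q(j)^{N(j)}=\prod_{j=1}^{\tau_E}P(j)^{N(j)}$ because $\hat{P}(k)^{\hat{N}(k)}=\hat{P}(k)^{\sum_{j\ge k}N(j)}$ exactly cancels the denominators of the $Q(j)^{N(j)}$. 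I do not expect any real obstacle here; the only pitfall is notational (the dual use of $Q$ and the offset between $\hat{N}_k$ as defined and $\hat{N}(k)$ as used), which is why I would fix the reading of the aggregated bucket up front.
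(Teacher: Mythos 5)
Your proposal is correct. The paper's own proof is a one-line appeal to ``simple algebra simplifying the expression for the probabilities in the right side'' --- that is, exactly the telescoping of multinomial coefficients and the cancellation of the $\hat{P}(k)^{\hat{N}(k)}$ factor against the denominators of the $Q(j)^{N(j)}$ that you relegate to a final sanity check. Your primary argument is different in character: a two-stage sampling (aggregate-then-condition) construction, in which the first factor is the law of the counts after merging categories $k,\dots,\tau_E$ into a single bucket and the second is the conditional law of the within-bucket counts given the bucket total. Both routes are elementary and prove the same identity, but yours buys two things the paper's does not: it explains structurally why the factorization holds, and it forces you to resolve the notational looseness in the statement --- namely that $\hat{N}(k)$ must be read as $N-\sum_{j=1}^{k-1}N(j)=\sum_{j=k}^{\tau_E}N(j)$ (an off-by-one relative to the displayed definition of $\hat{N}_k$), and that the probabilities in the second factor must be read as $P(j)/\hat{P}(k)$, which sum to one over $j=k,\dots,\tau_E$, rather than the literal $Q(j)=P(j)/\hat{P}(j)$, which do not (the latter reading only becomes legitimate once the lemma is iterated, as in the subsequent binomial-decomposition corollary). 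The paper's algebraic one-liner silently assumes these readings; what it buys is brevity, while your version pins down what the statement actually asserts before proving it.
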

\begin{proof}
The proof is simple algebra simplifying the expression for the probabilities in
the right side of the equation.
\end{proof}
\begin{corollary}{(Binomial decomposition)}
The multinomial distribution can be fully decomposed in terms of binomial
distributions in the form
\begin{equation}
Multi(N,p(1), \dots , p(\tau_E)) = \prod_{k=1}^{\tau_E} Binomial(E(k), Q(k))
\end{equation}
\end{corollary}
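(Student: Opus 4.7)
The plan is to obtain the corollary by iterating the multinomial splitting lemma, peeling off one binomial factor at each stage, with an induction on $\tau_E$ carrying the bookkeeping. The only substantive calculation will be verifying that the conditional probabilities at successive stages telescope exactly into the stated form $Q(k) = P(k)/\hat{P}(k)$.

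First I would handle the base case $\tau_E = 1$ (trivial, since $Q(1)=1$ and both sides reduce to the deterministic outcome that all $N$ individuals fall in category 1). For the inductive step, assuming the decomposition is known for any multinomial with $\tau_E - 1$ categories, I would apply the splitting lemma with $k=2$. This rewrites
\begin{equation*}
Multi(N, P(1), \ldots, P(\tau_E)) = Multi(N, P(1), \hat{P}(2)) \cdot Multi(\hat{N}_1, \text{conditional probs}),
\end{equation*}
where the first factor is the two-cell degenerate multinomial $Binomial(N, P(1)) = Binomial(E(1), Q(1))$, which is exactly the $k=1$ term of the target product. The second factor is a multinomial on $\tau_E - 1$ categories with sample size $\hat{N}_1 = E(2)$ and cell probabilities $P(j)/\hat{P}(2)$ for $j \ge 2$, to which the inductive hypothesis applies and produces the remaining binomial factors.

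The step that requires the most care is checking that, after $m$ successive splits, the residual multinomial has cell probabilities $P(j)/\hat{P}(m+1)$ for $j \ge m+1$. This is a telescoping identity: the conditional of a conditional gives $\big(P(j)/\hat{P}(m)\big)/\big(\hat{P}(m+1)/\hat{P}(m)\big) = P(j)/\hat{P}(m+1)$, so the leading cell probability at stage $m+1$ collapses to exactly $Q(m+1)$. Once this is established, the recursive definition $E(k) = Binomial(E(k-1), 1-Q(k))$ is just the complementary count to the binomial peeled off at stage $k$, i.e., $E(k) = \hat{N}_{k-1} - N(k)$, matching the splitting lemma's bookkeeping on the residual count. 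Iterating $\tau_E - 1$ times then yields
\begin{equation*}
Multi(N, P(1), \ldots, P(\tau_E)) = \prod_{k=1}^{\tau_E} Binomial(E(k), Q(k)),
\end{equation*}
with the final factor $Binomial(E(\tau_E), 1)$ being the deterministic assignment of the remaining individuals, closing the induction.
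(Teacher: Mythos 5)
Your proposal is correct and follows exactly the route the paper takes: the paper's proof is the single sentence ``repeated application of the previous lemma is all what is needed,'' and your induction with the telescoping identity $\bigl(P(j)/\hat{P}(m)\bigr)/\bigl(\hat{P}(m+1)/\hat{P}(m)\bigr)=P(j)/\hat{P}(m+1)$ is precisely the bookkeeping that sentence leaves implicit. No gaps; you have simply written out the details the authors omitted.
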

\begin{proof}
Repeated application of the previous lemma is all what is needed.
\end{proof}

We now state the application of these results to our modeling problem.
\begin{theorem}{(Compartmental presentation)}
Consider $\tau_E$ compartments associated to the days $k=1,\dots ,\tau_E$
after receiving the virus from the mosquito for those humans that are not yet
infective. Let the population number in the $k$ compartment be $E(k)$, and the
number of humans that become infective on day $k$ be $N(k)$ which is a random deviate
distributed with $Binomial(E(k),Q(k))$, with the definitions given above. 
Then the exposed period that corresponds to the individuals is distributed with $P(k)$.
\end{theorem}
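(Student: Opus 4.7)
The plan is to deduce the statement directly from the binomial decomposition corollary. In the IBM the $N$ humans who acquire the virus on day $d_0$ receive independent exposed durations drawn from $P$, so the vector of exit counts $(N(1),\ldots,N(\tau_E))$ is a random deviate of $Multi(N,P(1),\ldots,P(\tau_E))$ and, tautologically, every individual has exposed time distributed according to $P$. The theorem runs in the opposite direction: starting only from the sequential binomial dynamics on compartments, one must verify that the same individual-level law is recovered.

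First I would rewrite the compartmental recursion as a joint law on the exit counts by chaining together the binomials supplied by the corollary:
\begin{equation}
\prod_{k=1}^{\tau_E} Binomial(E(k),Q(k)) \;=\; Multi(N,P(1),\ldots,P(\tau_E)).
\end{equation}
Thus the joint distribution of $(N(1),\ldots,N(\tau_E))$ generated by the compartmental rule is identical to the multinomial distribution that would arise from $N$ independent draws from $P$. In particular the marginal probability that exactly $n$ of the $N$ individuals exit on day $k$ coincides under the two descriptions.

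Next I would translate this equality of count distributions into a statement about individuals. The compartmental rule treats the members of each compartment as exchangeable, so once the counts $N(1),\ldots,N(\tau_E)$ are realised one may label the $N$ humans uniformly at random and assign them to these groups; the resulting joint law on individual exposed times is permutation invariant. Coupling this labelled outcome to $N$ iid $P$-draws through the equality of laws above shows that each individual's exposed time is marginally distributed as $P(k)$, which is precisely the claim.

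The main obstacle is conceptual rather than computational: one has to articulate in what sense a purely count-based dynamics endows otherwise anonymous individuals with a probability distribution over exposed times. The exchangeability and random-labelling step is where the care is needed; once this is granted, the equality of the compartmental joint law with the multinomial is an immediate consequence of the binomial decomposition corollary, and the per-individual statement then follows at once from the definition of the multinomial in terms of iid draws.
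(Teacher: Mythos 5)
Your proposal is correct and follows essentially the same route as the paper, whose entire proof is that the theorem ``follows immediately from the corollary'' (the binomial decomposition of the multinomial). The extra exchangeability/random-labelling step you spell out is sound and is in fact the point the authors address in the prose preceding the lemma, where they note that the IBM's assignment of individuals to the classes $N(1),\dots,N(\tau_E)$ is entirely random and carries no additional information.
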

\begin{proof}
It follows immediately from the corollary.
\end{proof}

\begin{figure}[f]
\includegraphics[width=10cm]{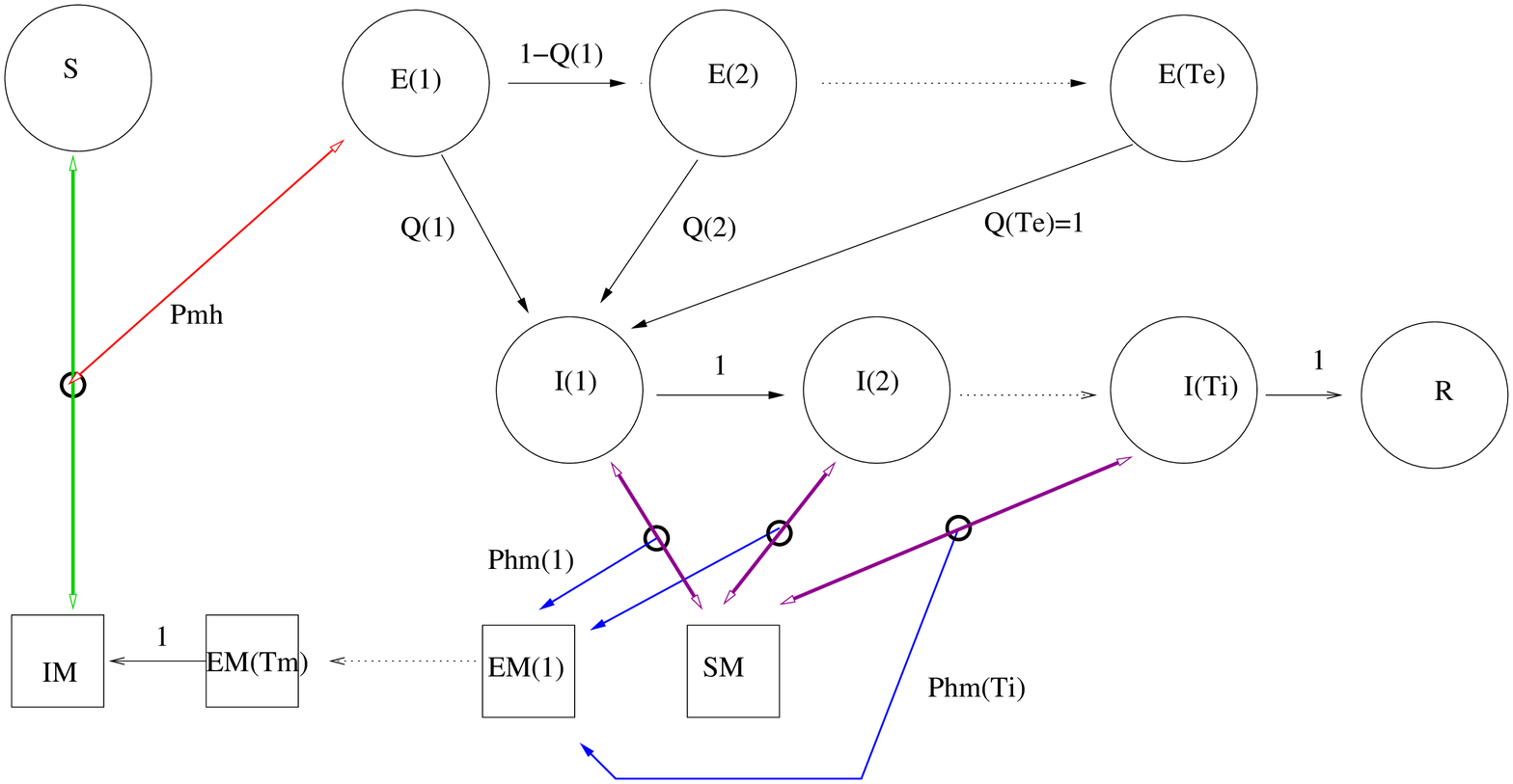}
\caption{Scheme for the progression of dengue. Circles for human subpopulations
(Susceptible, Exposed by day, Infective by day and Recovered) and squares for
mosquito subpopulations. Solid directed arrows indicate daily progression, doted
directed arrows indicate several days of progression, bi-directed arrows with a
circle indicate interactions resulting in transitions for members of one
population. Probabilities are indicated next to arrows. The mosquito dynamics
(birth, death, \dots) is not represented.\label{esquema}}
\end{figure}

\section{Summary, discussion and conclusions}
We have developed an IBM model for the evolution of dengue outbreaks that takes
information from mosquito populations simulated with an \emph{Aedes aegypti}
model and builds thereafter the epidemic part of the evolution. 
The split between mosquito evolution and epidemic evolution is not perfect 
since the events {\em bite} and {\em flight} are treated as independent events
while they are in fact correlated \cite{wolf53,reit95,muir98,edma98}, 
both being related to oviposition.

For mosquito populations sufficiently large to support the epidemic spread of
dengue, most of the stochastic variability is provided by the epidemic process
rather than by stochastic fluctuations in the mosquito population. Thus, the
splitting of the models improves substantially the performance of the codes.

The model we developed is shown to be an IBM implementation of a compartmental
model, of a form not usually considered. At the level of the description in
this work, IBM does not play a fundamental role, contrary to what it has been
previously argued  \cite{koop99,bian04}. Rather, its use is a matter
of algorithmic convenience.  This result indicates that ``IBM {\em versus}
compartmental models'' is not a fundamental dichotomy but it may be a matter of
choice (depending of the skills and goals of the user): IBM facilitate coding,
compartmental models lend themselves to richer forms of analysis.

The model was used to explore the actual influence of the distribution of
exposed time for humans in those characteristics of epidemic outbreaks that
matter the most: determining the level of mosquito abundance that makes
unlikely the occurrence of a dengue outbreak and determining the size and
time-lapse of the outbreak. The distributions used are (a) an experimentally
obtained distribution (Nishiura \cite{nish07}) (labeled N), (b) and (c)
exponential distributions adjusted to have the same mean or the same median as
N, labeled E1 and E2, and (d) a fixed time equal to the experimental mean
(labeled the D-distribution).  The probability of producing one or more
secondary cases after the arrival of an infective human does not depend on the
choice of distribution. The characteristic size of the epidemics under a
temperate climate are exaggerated by the E1-distribution but presents no
substantial difference for the other distributions. The dispersion of values is
exaggerated by both exponential distributions. We observed no important
differences in the duration of epidemics developed under a constant temperature
since the outbreaks reach almost all the population and the velocity is
regulated by the dispersion of the mosquitoes in the absence of movement by
humans.

The only statistic able to discriminate easily between the four distributions
of exposed time was found to be the time of appearance of the first secondary
case. A result that was expected as well.

In conclusion, only very specific matters seem to depend on the
characteristics of the distribution of 
exposed times for human beings.
Looking towards the past, conclusions reached using exponential and delta
distributions cannot be objected on such basis. Looking towards the future,
simple compartmental models can be constructed as well using realistic
distributions and there is no reason to limit the models to the choice:
exponential, gamma or delta.

\bibliographystyle{elsarticle-num}

\newpage
\listoffigures

\end{document}